\documentclass{article}

\usepackage{arxiv}
\usepackage[T1]{fontenc}
\usepackage[utf8]{inputenc}
\usepackage{tikz}
\usetikzlibrary{arrows,arrows.meta}
\usepackage{xparse}
  \usepackage{amsmath}
\usepackage{amsthm}
  \usepackage{amssymb}
  \usepackage{hyperref}

  \hypersetup{
    colorlinks,
    linkcolor={blue!70!black},
    citecolor={blue!70!black},
    urlcolor={black}
}
\usepackage{svn-multi}




\svnid{$Id: disc.tex 43301 2020-03-04 17:05:01Z ti5vt $}

\usepackage[export]{adjustbox}

\usepackage[ruled,vlined]{algorithm2e}
\SetKwFor{round}{\hspace*{-0.78ex}}{}{end}
\SetKwFunction{randomColor}{randomColor}
\SetKw{KwRet}{return}
\SetKw{Return}{return}
\SetKwInOut{Input}{input}

\graphicspath{{./figures/}}

\newcommand{\pr}{\textit{parity}\xspace}
\newcommand{\mess}{\textit{messTbl}\xspace}

\newcommand{\false}{\mathit{false}}
\newcommand{\true}{\mathit{true}}
\newcommand{\abs}[1]{{\left\lvert #1 \right\rvert}}

\newcommand{\suchthat}{\;\ifnum\currentgrouptype=16 \middle\fi|\;}

\newcommand{\B}{\ensuremath{{\cal B}}\xspace}
\newcommand{\G}{\ensuremath{{\cal G}}\xspace}

\newtheorem{theorem}{Theorem}

\newtheorem{lemma}[theorem]{Lemma}

\makeatletter
\newcommand*{\Xbar}{}%
\DeclareRobustCommand*{\Xbar}{%
  \mathpalette\@Xbar{}%
}
\newcommand*{\@Xbar}[2]{%
  \sbox0{$#1\mathrm{X}\m@th$}%
  \sbox2{$#1X\m@th$}%
  \rlap{%
    \hbox to\wd2{%
      \hfill
      $\overline{%
        \vrule width 0pt height\ht0 %
        \kern\wd0 %
      }$%
    }%
  }%
  \copy2 %
}
\makeatother

\DeclareRobustCommand{\doublearrow}{%
  \tikz{%
    \draw[-{Latex[length=1.4mm,width=1.5mm] Latex[length=1.4mm,width=1.5mm]}] (0.0,0.0) -- (0.45,0.0);
  }%
  \xspace%
}



\RequirePackage{mleftright} 
\RenewDocumentCommand{\Pr}{ s o o m }{%
  \IfBooleanTF{#1}{%
    \IfNoValueTF{#2}{%
      \specrandop{Pr}[#4]%
    }{%
      \IfNoValueTF{#3}{%
        \specrandop{Pr}\mathopen{#2[}#4\mathclose{#2]}%
      }{%
        \specrandop{Pr}_{#3}\mathopen{#2[}#4\mathclose{#2]}%
      }
    }%
  }{%
    \IfNoValueTF{#2}{%
      \specrandop{Pr}\mleft[#4\mright]%
    }{%
      \specrandop{Pr}_{#2}\mleft[#4\mright]%
    }
  }%
}

\usepackage{fontawesome5}
\usepackage{listings}
\usepackage{mathtools}
\usepackage{tuhhcolor}

\lstdefinestyle{daStyle}{
  language         = C++,
  basicstyle       = \ttfamily\small,
  commentstyle     = \color{tuhh_darkgray},
  keywordstyle     = \bfseries,
  aboveskip        = 1pt,
  belowskip        = 0pt,
  fontadjust       = true,
  columns          = fullflexible,
  keepspaces       = true,
  tabsize          = 3,
  showstringspaces = false,
  mathescape       = true,
  escapechar       = €,
  morekeywords     = {concurrently,send,start,to,do,down,receive,id,while,repeat
,foreach,until,init,compute,let,wait,null,function},
  moredelim        =**[is][\color{red}]{§}{§},
  moredelim        =**[is][\sffamily\slshape]{`}{`},
  literate         = {Ä}{{\"A}}1
                     {Ö}{{\"O}}1
                     {Ü}{{\"U}}1
   	                 {ä}{{\"a}}1
			               {ö}{{\"o}}1
			               {ü}{{\"u}}1
			               {ß}{{\ss{}}}1
                     {_m}{\textsubscript{m}}2
                     {_p}{\textsubscript{p}}2
                     {_u}{\textsubscript{u}}2
                     {_v}{\textsubscript{v}}2
                     {_w}{\textsubscript{w}}2
                     {_t}{\textsubscript{t}}2
                     {_p}{\textsubscript{p}}2
                     {_id}{id}3
                     {_or}{or}3
                     {_for}{for}4
                     {_not}{not}4
                     {_and}{and}4
                     {_compute}{compute}8
                     {_receive}{receive}8
                     {_start}{start}5
                     {_wait}{wait}5
                     {_until}{until}6
                     {_and}{and}4
                     {_send}{send}5
                     {_new}{new}4
                     {_not}{not}4
                     {_to}{to}3
                     {_do}{do}3
                     {_let}{let}4
                     {_using}{using}6
                     {:}{$\;:\;$}1
                     {*}{{$\ast$}}1
                     {,}{,$\,$}1
                     {+}{{$\,+\,$}}1
                     {*}{{$\,\ast\,$}}1
                     {\%}{{$\,\%\,$}}1
                     {-}{{$\,-\,$}}1
                     {>}{{$\;>\;$}}1
                     {≥}{{$\;\ge\;$}}1
                     {>=}{{$\;\ge\;$}}2
                     {<}{{$\;<\;$}}1
                     {≤}{{$\;\le\;$}}1
                     {<=}{{$\;\le\;$}}2
                     {=}{{$\;=\;$}}1
                     {...}{{$\,\dots$}}3
                     {≠}{{$\;\neq\;$}}1
                     {/=}{{$\;\neq\;$}}2
                     {:=}{{$\;\coloneqq\;$}}2
                     {\\}{{$\,\backslash\,$}}1
                     {|}{{$\,\mid\,$}}1
                     {_NN}{{$\mathbb{N}$}}3
                     {N+(}{N\textsuperscript{+}$\!$(}3
                     {_in}{{$\!\!\in\!\!$}}2
                     {_notin}{{$\!\!\not\in\!\!$}}2
                     {__in}{\textsubscript{in}}4
                     {__out}{\textsubscript{out}}5
                     {_cup}{{$\!\cup\!$}}4
                     {_omega}{{$\omega$}}6
                     {_Delta}{{$\Delta$}}6
                     {_exists}{{$\Exists\!\!\!$}}7
                     {_forall}{{$\Forall\!\!\!$}}7
                     {_empty}{{$\emptyset$}}6
                     {_infty}{{$\infty$}}6
                     {_bot}{{$\bot$}}4
                     {_\{}{{$\{$}}2
                     {_\}}{{$\}$}}2
                     {_:}{$:$}2
                     {/\\}{{$\!\wedge\!$}}2
                     {\\/}{{$\!\vee\!$}}2
                     {<<}{{$\langle$}}2
                     {>>}{{$\rangle$}}2
                     {s.t.}{{\textbf{s.t.}}}4
                     {_unvis}{\textsubscript{unvis}}6
                     {_0}{\textsubscript{0}}1
                     {_1}{\textsubscript{1}}1
                     {_2}{\textsubscript{2}}1
                     {_3}{\textsubscript{3}}1
                     {_4}{\textsubscript{4}}1
                     {_5}{\textsubscript{5}}1
                     {_6}{\textsubscript{6}}1
                     {_7}{\textsubscript{7}}1
                     {_8}{\textsubscript{8}}1
                     {_9}{\textsubscript{9}}1
                     {_notp}{\textsubscript{$\overline{\texttt{p}}$}}5
                     {not_p}{$\overline{\texttt{p}}$}5
                     {not_parity}{$\overline{\texttt{parity}}$}5
                     {++}{{++}}2
                     {--}{{--}}2
                     {init:}{{\textbf{\textcolor{\cmdColor}{init}}:}}5
                     {start:}{{\textbf{\textcolor{\cmdColor}{start}}:}}6
                     {compute:}{{\textbf{\textcolor{\cmdColor}{compute}}:}}8
                     {Inevery:}{{\textbf{\textcolor{\cmdColor}{In every round do}}:}}8
                     {receive(}{{\textbf{\textcolor{\cmdColor}{receive}}(}}8
                     {_receive(}{{\textbf{receive}(}}9
}

\def\cmdColor{tuhh_darkred}

\lstset{
  style = daStyle,
}



\newcommand{\fontmathtext}[1]{\mathsf{#1}}%
\DeclareRobustCommand{\alorithm}[1]{{\ensuremath{{\cal A}_\fontmathtext{#1}}}\xspace}

\DeclareRobustCommand{\SynAF}{\alorithm{AF}}
\DeclareRobustCommand{\SynAFI}{\alorithm{AFI}}
\DeclareRobustCommand{\SynAFIF}{\alorithm{AFI^F}}
\DeclareRobustCommand{\SynAFIS}{\alorithm{AFI^S}}

\title{Synchronous Concurrent Broadcasts for Intermittent Channels
  with Bounded Capacities}
\author{%
  \href{https://orcid.org/0000-0001-9964-8816}{%
    \includegraphics[height=0.8em]{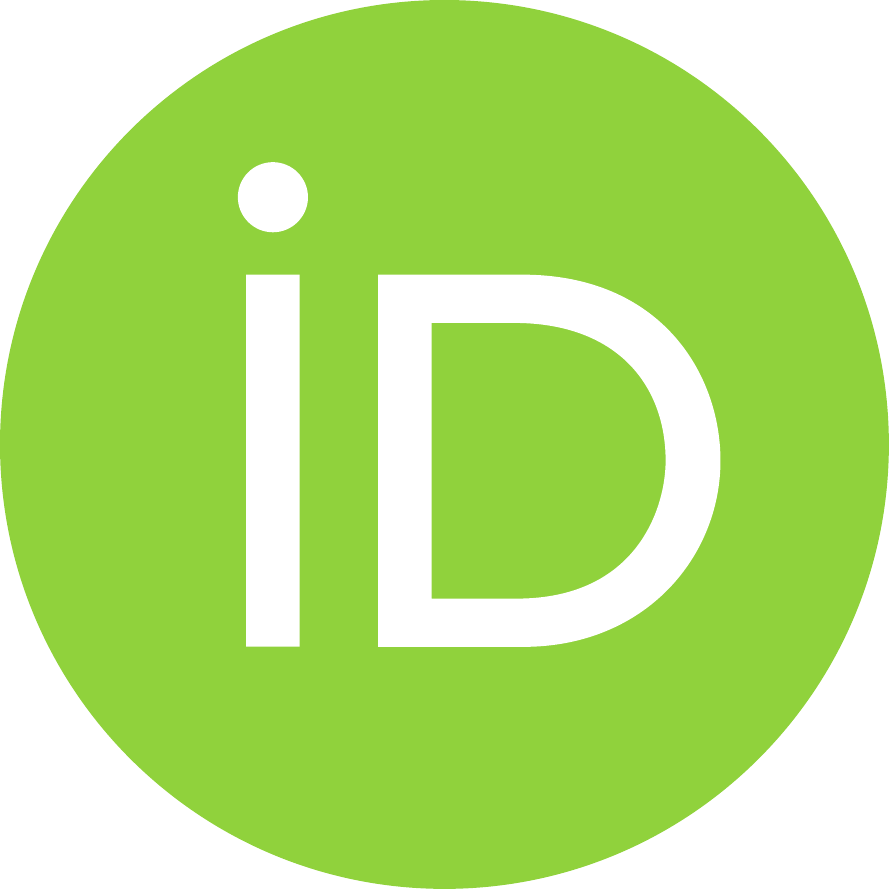}%
    \hspace{1mm}%
    Volker Turau%
  } \\
  Institute of Telematics \\
  Hamburg University of Technology \\
  21073 Hamburg, Germany \\
  \texttt{turau@tuhh.de}
}



\begin{document}

\maketitle

\begin{abstract}
In this work we extend the recently proposed synchronous broadcast
  algorithm {\em amnesiac flooding} to the case of intermittent
  communication channels. In amnesiac flooding a node forwards a
  received message in the subsequent round. There are several reasons
  that render an immediate forward of a message impossible: Higher
  priority traffic, overloaded channels, etc. We show that postponing
  the forwarding for one or more rounds prevents termination. Our
  extension overcomes this shortcoming while retaining the advantages
  of the algorithm: Nodes don't need to memorize the reception of a
  message to guarantee termination and messages are sent at most twice
  per edge. This extension allows to solve more general broadcast
  tasks such as multi-source broadcasts and concurrent broadcasts for
  systems with bounded channel capacities.
\end{abstract}

\keywords{Distributed Algorithms, Flooding, Intermittent Channels,
  Bounded Capacities}

\section{Introduction}

Broadcasting is the task of delivering a message from one network node
to all other nodes. Broadcast algorithms constitute a fundamental
component of many distributed systems and are often used as
subroutines in more complex algorithms. There are numberless
applications of broadcast. Demers et al.\ discuss the 
maintenance of a database replicated at many sites in a large
corporate network \cite{Demers:1987}. Each database update can be
injected at various nodes, and these updates must be propagated to all
nodes in the network. The replica become fully consistent only when
all updating activity has stopped and the system has become quiescent.
The efficiency of the broadcasting algorithm determines the rate of
updates the system can handle.

A common broadcasting algorithm is flooding. The originator $v_0$ of a
message $m$ forwards $m$ to all neighbors and when a node receives $m$
for the first time, it sends it to all its neighbors in the
communication graph $G(V,E)$. Flooding uses $2\abs{E}$ messages and
terminates after at most $\epsilon_G(v_0)+1$ rounds, $\epsilon_G(v_0)$
denotes the maximal distance of $v_0$ to any other node. In this form
flooding is a stateful algorithm, it requires each node to keep a
record of already forwarded messages. This requires storage per node
in the order of the number of broadcasted messages. Since nodes are
unaware of the termination of the broadcast, these records have to be
stored for an unknown time.

For synchronous distributed systems stateless broadcasting algorithms
are known. Hussak and Trehan proposed {\em amnesiac flooding} (\SynAF)
\cite{Hussak:2020}. Every time a node receives message $m$, it
forwards it to those neighbors from which it didn't receive $m$ in the
current round. In contrast to classic flooding, a node may forward a
message twice. Surprisingly amnesiac flooding terminates and each
message is sent at most twice per edge. Crucial for the termination of
\SynAF is that the forwarding of messages is always performed in the
round immediately following the reception. We show in
Sec.~\ref{sec:afi} that algorithm \SynAF no longer terminates when
message forwarding is suspended for some rounds. There can be several
reasons for suspending forwarding, when traffic with a priority higher
than broadcast has to be handled, or when the capacity of a
communication channel is exhausted due to several concurrent
broadcasts. Surprisingly it requires only a simple extension to make
\SynAF to work correctly despite a limited number of suspensions. Our
first contribution is the extended algorithm \SynAFI described in
Sec.~\ref{sec:afi}.

Our first result enables us to prove that algorithm \SynAF is also
correct for multi-source broadcasting, i.e., several nodes broadcast
the same message $m$ in different rounds, provided a broadcast of $m$
is invoked before $m$ reaches the invoking node from another
broadcast. In Sec.~\ref{sec:multi-source} we prove that in this case
\SynAF delivers $m$ after at most $Diam(G)$ rounds and forwards $m$ at
most $2\abs{E}$ times. If the communication channel is unavailable $f$
times then \SynAFI delivers $m$ after at most $Diam(G)+2f$ rounds, $m$
is still forwarded at most $2\abs{E}$ times.

While algorithm \SynAFI is of interest on its own, it can also be used
to solve the general task of multi-message broadcast in systems with
bounded channel capacities. Multi-message broadcast means that
multiple nodes initiate broadcasts of different messages, even when
broadcasts from previous initiations have not yet terminated. If
channel capacities are bounded, nodes can forward only a limited
number of messages per round. Bounded channel capacities occur in
communication systems utilizing TDMA, where communication is performed
in fixed length slots and therefore only $b$ messages can be sent in
one round. If more than $b$ messages are in the sending queue, then
the forwarding of some messages has to be postponed for at least one
round. In Sec.~\ref{sec:dynamic} we present two algorithms \SynAFIS
and \SynAFIF for this task. The advantage of these algorithms is that
compared to classic flooding besides the unavoidable message buffer no
state information has to be maintained. Thm.~\ref{thm:main} summarizes
our third contribution.

\begin{theorem}\label{thm:main}
  Let $\cal S$ be a sequence of message broadcasts (identical or
  different messages) by the nodes of a graph $G(V,E)$ in arbitrary
  rounds under the restriction that a broadcast of message $m$ is
  invoked before $m$ reaches the invoking node from a broadcast of
  another node. If in each round each node can send at most $b$
  messages to each neighbor algorithm \SynAFIF eventually terminates
  and delivers each message of $\cal S$. Nodes don’t need to memorize
  the reception of a message. If $G$ is bipartite each message is
  forwarded $\abs{E}$ times, otherwise $2\abs{E}$ times.
\end{theorem}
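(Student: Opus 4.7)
The plan is to reduce the analysis of \SynAFIF to a per-message application of the results already established in Sec.~\ref{sec:multi-source} for multi-source amnesiac flooding and in Sec.~\ref{sec:afi} for amnesiac flooding with intermittent channels. Fix an arbitrary message $m$ that appears in $\cal S$ and let $I_m \subseteq V$ be the set of nodes that initiate a broadcast of $m$, together with the rounds in which they do so. The restriction in the statement guarantees that no node of $I_m$ has received $m$ from elsewhere at the time of its own invocation, so the propagation of $m$ across the network is exactly that of a multi-source broadcast with source set $I_m$; by the result of Sec.~\ref{sec:multi-source} a \emph{pure} execution of \SynAF on $m$ alone would deliver $m$ after at most $Diam(G)$ rounds, forwarding $m$ at most $2\abs{E}$ times in general and at most $\abs{E}$ times when $G$ is bipartite.

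Next I would argue that from the viewpoint of $m$ the algorithm \SynAFIF behaves as \SynAFI with a certain number of suspensions. Whenever a node $u$ is scheduled to forward $m$ to a neighbor $v$ but the outgoing buffer to $v$ already holds $b$ messages of higher FIFO priority, \SynAFIF postpones that forwarding to a later round; this is exactly the intermittent-channel event that \SynAFI is designed to absorb. If $f_m$ is the largest number of such postponements experienced by $m$ along any path from an element of $I_m$, then the correctness result of Sec.~\ref{sec:afi} yields delivery of $m$ within $Diam(G)+2f_m$ rounds and preserves the ``at most twice per edge'' forwarding invariant; in particular $m$ is still forwarded at most $2\abs{E}$ times, and at most $\abs{E}$ times in the bipartite case.

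The remaining task is to bound $f_m$ for each individual $m$, which in turn yields termination. Here I would proceed by induction on the round in which a message is injected into $\cal S$. For a message $m$ injected at round $r_m$, the messages that can ever delay $m$ at any outgoing buffer are only those already injected by the time the wave of $m$ reaches that buffer; by the inductive hypothesis each such earlier message $m'$ already has a finite delivery time, hence a finite number of residual forwardings to perform across that edge. Since each $m'$ contributes at most two such forwardings per edge, the FIFO discipline of \SynAFIF guarantees that the total number of rounds $m$ can be blocked at that buffer is finite; summing over the paths gives a finite $f_m$, and thus a finite delivery time for $m$.

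The main obstacle will be making the induction in the last paragraph rigorous: even though each individual message uses at most two slots per edge, new messages continue to be injected in later rounds, so one must be careful to show that the queue \emph{ahead} of $m$ at every relevant edge is frozen as soon as $m$ enters it under FIFO. Once $f_m$ is established to be finite, the bipartite refinement of the forwarding count is inherited verbatim from the parity argument that underlies \SynAF on bipartite graphs, since neither the intermittent-channel extension of Sec.~\ref{sec:afi} nor the buffering of \SynAFIF introduces additional edge traversals for $m$.
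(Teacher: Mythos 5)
Your first two reductions match the paper's route: Thm.~\ref{thm:main} is obtained there by combining the multi-source result (Thm.~\ref{thm:multi_gen}) with the multi-message result (Thm.~\ref{thm:b_mess}), and the latter is proved exactly by recasting the per-message delays of \SynAFIF as an availability scheme and invoking Thm.~\ref{theo:main1}. The divergence, and the gap, is in your third paragraph. Your induction on the injection round is not well-founded: under \SynAFIF the messages that block $m$ at a buffer are those selected ahead of $m$ \emph{at that buffer}, and these can include messages injected \emph{after} $m$ (a message injected at round $r'>r_m$ near some node $u$ can reach $u$'s buffer before the wave of $m$ does, and conversely $m$ can delay that message elsewhere). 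So ``the queue ahead of $m$ is frozen once $m$ enters it'' does not reduce the delay of $m$ to properties of earlier-injected messages, and the circularity you flag as ``the main obstacle'' is not something to be made rigorous later --- it is the point where the argument breaks. A second, smaller issue: you argue for a FIFO priority discipline, whereas \SynAFIF only assumes a fair selection criterion (each pending entry is selected within a bounded number of selections); FIFO is one instance, but the theorem is stated for the fair variant.

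The paper avoids bounding $f_m$ a priori altogether. For each finite horizon $j$ it defines a \emph{truncated} availability scheme $A_j$ ($A_j(v,i)=\true$ for $i>j$, and for $i\le j$ exactly when $v$ forwards $m$ in round $i$), which has only finitely many $\false$ pairs by construction; the first $j$ rounds of $m$'s propagation under \SynAFIF coincide with a \SynAFI execution under $A_j$. Theorem~\ref{theo:main1} then bounds the number of forwards of $m$ up to round $j$ by $2\abs{E}$ \emph{uniformly in $j$}, because the message-count bound does not depend on the number of unavailabilities. Since the forward count is nondecreasing and bounded, forwarding of $m$ ceases after some finite round, fairness guarantees $m$ never sits in a buffer forever, and the completed execution is a \SynAFI execution under a finite scheme, which delivers $m$ to all nodes with the stated $\abs{E}$ versus $2\abs{E}$ counts. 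If you replace your induction by this truncation argument, the rest of your outline goes through.
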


\section{State of the Art}
\label{sec:state-art}
Broadcasting as a service in distributed systems can be realized in
two ways: Either by using a pre-constructed structure such as a
spanning tree or by performing the broadcast each time from scratch.
In the first case a broadcast can be performed with $n-1$ messages. In
the second case a broadcast can be realized by $2(n-1)$ messages by
traversing the graph in a DFS style and carrying the identifiers of the
visited nodes along with the messages. This requires messages that
store up to $n$ node identifiers. If the message size is
restricted to $o(n)$ and only a fixed number of messages can be
sent per round per link then each deterministic broadcast algorithm has
message complexity $\Omega(\abs{E})$, Thm.~23.3.6 \cite{Peleg:2000}.
For a detailed analysis of broadcast algorithms we refer to Sec.~23 of
\cite{Peleg:2000}.

In this work we focus on broadcast algorithms that do not rely on a
pre-con\-structed structure and use limited communication channels.
The most basic algorithm of this category is flooding as described
above. Flooding uses $2\abs{E}$ messages and terminates after at most
$\epsilon_G(v_0)+1$ rounds, these bounds hold in the synchronous and
asynchronous model \cite{Peleg:2000}. It requires each node to
maintain for each message a record that the message has been
forwarded. These records have to be kept for an unknown time. This
requires storage per node proportional to the number of disseminated
messages. Amnesiac flooding \SynAF overcomes this limitation in
synchronous systems and is thus stateless \cite{Hussak:2020}. \SynAF
delivers a broadcasted message twice to each node. Thus, we have to
distinguish between delivery and termination time. \SynAF delivers a
message (resp.\ terminates) for an initiator $v_0$ on any finite graph
in at most $\epsilon_G(v_0)$ (resp.\ $\epsilon_G(v_0) + Diam(G)+1$)
rounds, where $Diam(G)$ is the diameter of $G$. The termination time
compared to standard flooding increases almost by a factor of $2$.
Amnesiac flooding was also analyzed for sets of initiators
\cite{Turau:2020b}. A stateless broadcasting algorithm with the same
time complexity as classic flooding has recently been proposed in
\cite{Turau:2020a}.

A problem related to broadcast is rumor spreading. It describes the
dissemination of information in networks through pairwise
interactions. A simple model for rumor spreading is that in each
round, each node that knows the rumor, forwards it to a randomly
chosen neighbor. For many topologies, this strategy is a very
efficient way to spread a rumor. With high probability the rumor is
received by all vertices in time $\Theta(\log n)$, if the graph is a
complete graph or a hypercube \cite{Frieze:1985,Feige:1990}. New
results about rumor spreading can be found in \cite{Mocquard:2020}.

Intermittent channel availability is no issue for classic flooding and
thus has not been considered. Broadcasting in distributed systems with
bounded channel capacities has received little attention. Hussak et
al.\ consider a model where each node can send a single message per
edge per round \cite{Hussak:2020c}. They propose variants of amnesiac
flooding to handle the case of many nodes invoking broadcasts of
different messages in different rounds. They show that their
algorithms terminate, but message delivery to all nodes is only
guaranteed in the special case that a single node broadcasts different
messages. Our work is more general and uses a different approach.

Raynal et al.\ present a broadcast algorithm suited for dynamic
systems where links can appear and disappear \cite{Raynal:2014}. Some
algorithms of \cite{Hussak:2020c} also maintain their properties in
case edges or nodes disappear over time. Casteigts et al.\ analyze
broadcasting with termination detection in time-varying graphs
\cite{Casteigts:2010}. They prove that the solvability and complexity
of this problem varies with the metric considered, as well as with the
type of a priori knowledge available to nodes.

\section{Notation and Model}
In this work $G(V,E)$ denotes a finite, connected, undirected graph
with $n=|V|$. Let $v,u\in V$, $d_G(v,u)$ denotes the distance between
$v$ and $u$ in $G$, $N(v)$ the set of neighbors and $\epsilon_G(v)$
the {\em eccentricity} of $v$ in $G$, i.e., the greatest distance
between $v$ and any other node in $G$. $Diam(G)$ denotes the maximum
eccentricity of any node of $G$. An edge $(u,w)\in E$ is called a {\em
  cross edge} with respect to a node $v_0$ if
$d_G(v_0,u) = d_G(v_0,w)$. $\Delta$ denotes the maximal node degree in
$G$. Each node has a unique id and is aware of the ids of its
neighbors but does not have any knowledge about graph parameters such
as the number of nodes or diameter.

The goal of a broadcasting algorithm is to disseminate a message
created by a node to all nodes of the network. Messages are assumed to
be distinguishable, each having unique id. No message is lost in
transit. A broadcast is said to {\em terminate} when all network
events (message sends/receives) that were caused by that broadcast
have ceased. A broadcast message is said to have been {\em delivered},
if it has been received by all the nodes in the network.

In this paper we consider synchronous distributed systems, i.e.,
algorithms are executed in rounds of fixed length and all messages
sent by all nodes in a particular round are received and processed in
the next round. In Sec.~\ref{sec:dynamic} we assume that in each round
each node can only send a constant number $b$ of messages to a subset
of its neighbors. This can be realized by a network-level broadcast,
where each message contains the identifiers of the receivers. This
requires $O(\Delta\log n)$ bits in each messages. Besides this, each
message has just enough space to contain the information to be
disseminated. In particular two messages cannot be aggregated into
one.

\section{Handling Intermittent Channels}\label{sec:afi}
In this section we extend \SynAF so that it operates correctly with
intermittent channel availabilities. Alg.~\ref{alg:af} recaps the
details of amnesiac flooding \SynAF as described in
\cite{Hussak:2020}. A node that wants to flood a message $m$ sends $m$
to all neighbors. Every time a node receives $m$, it forwards it to
those neighbors from which it didn't receive $m$ in the current round.
The code in Alg.~\ref{alg:af} shows the handling of a single message
$m$. If several messages are broadcasted concurrently, each requires
its own set $M$.

 \SetProgSty{}
 \begin{algorithm}
  \Input{A graph $G=(V,E)$, a subset $S$ of $V$, and a message $m$.}
  \BlankLine
  In round $1$ each node $v\in S$ sends message $m$ to each neighbor
  in $G$\;
   \round{Each node $v$ executes in every round $i>1$}{
   $M := N(v)$\;
   \ForEach{receive$(w,m)$}{$M := M \setminus \{w\}$\;}
   \If{$M \not=N(v)$} {
   \lForAll{$u\in M$}{send($u,m$)}}
 }
\caption{Algorithm \SynAF distributes a message $m$ in the
  graph $G$}\label{alg:af}
\end{algorithm}

An attempt to handle channel unavailabilities is to postpone the
sending of some messages to the next round when the channel is again
available. Messages received in the mean time are treated
as before, the senders are inserted into $M$. Unfortunately, this
modification of \SynAF may not terminate. Fig.~\ref{fig:unlimited}
presents an illustrative example. In the graph depicted in the top
left node $v_0$ broadcasts a message $m$ in round $0$. Suppose that
node $v_2$ (resp.\ $v_3$) cannot send messages in rounds $2,3$ and $4$
(resp.\ in round $2$). We show that forwarding messages in the first
available round may prevent termination. In the first round $v_0$
sends $m$ to $v_1,v_2$ and $v_3$. In round $2$ nodes $v_2$ and $v_3$
cannot forward $m$ and postpone the sending. Node $v_3$ postpones this
to round $3$. In this round $v_2$ also receives a message from $v_1$.
In rounds $3$ and $4$ node $v_2$ in addition receives a message from
node $v_5$. These three events cannot be handled immediately and are
also postponed. In round $5$ the channel becomes available for node
$v_2$, but in the meantime $v_2$ has received a message from all its
neighbors and thus \SynAF will not send $m$ to any of $v_2$'s
neighbors. From this round on the channel is continuously available
and thus \SynAF can be executed in its original form. In round $9$ the
algorithm reaches the same configuration as in round $5$. Thus, the
algorithm does not terminate.

\begin{figure}[h]%
  \hfill
    \includegraphics[scale=0.9]{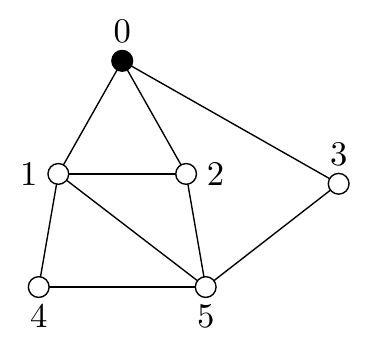}
  \hfill
    \includegraphics[scale=0.9]{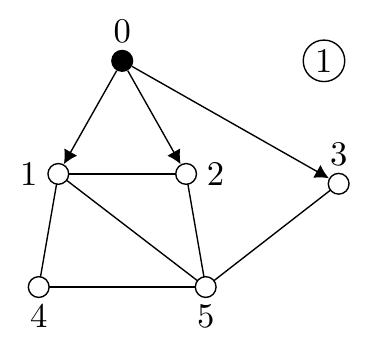}
  \hfill
    \includegraphics[scale=0.9]{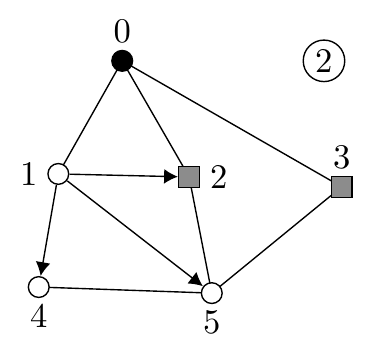}
  \hfill
    \includegraphics[scale=0.9]{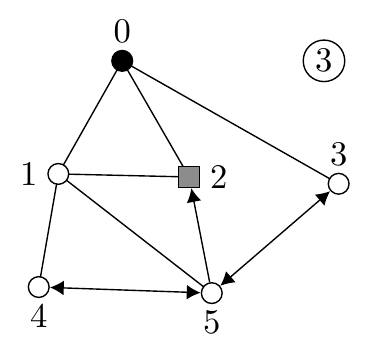}
  \hfill\null

  \hfill
    \includegraphics[scale=0.9]{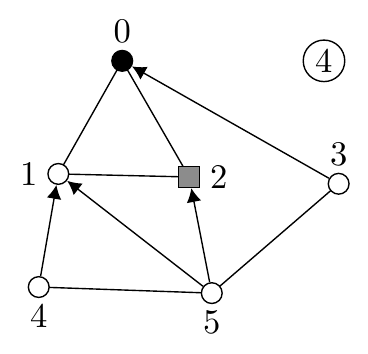}
  \hfill
    \includegraphics[scale=0.9]{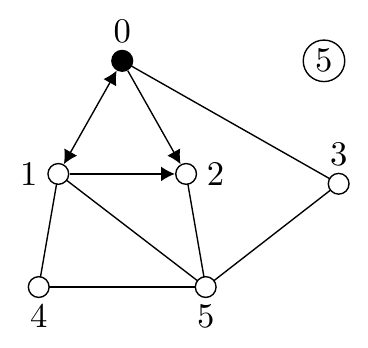}
  \hfill
    \includegraphics[scale=0.9]{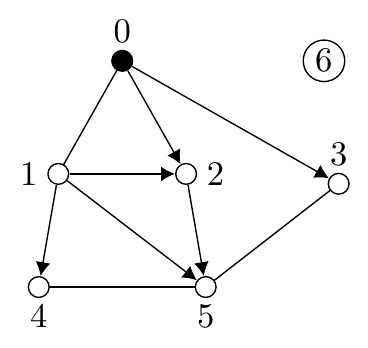}
  \hfill
    \includegraphics[scale=0.9]{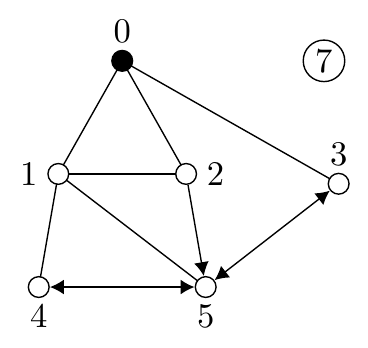}
  \hfill\null

  \hfill
    \includegraphics[scale=0.9]{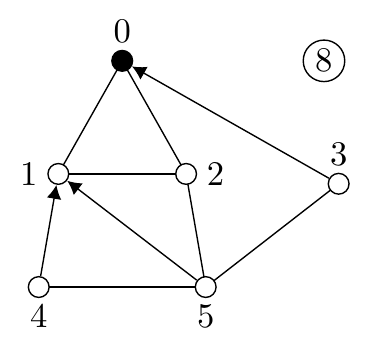}
  \hfill
    \includegraphics[scale=0.9]{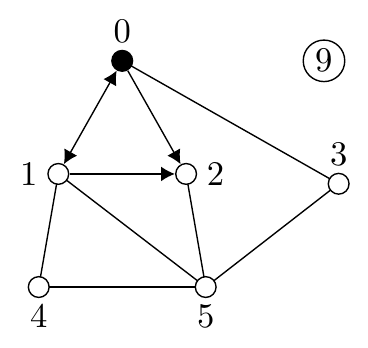}
  \hfill\null
  \caption{A naive extension of algorithm \SynAF does not terminate in
    case of intermittent channel availability. The configuration of
    round $5$ repeats itself in round $9$.}\label{fig:unlimited}
\end{figure}

There is no striking reason for the failure of this naive attempt to
fix \SynAF. To analyze the failure we reconsider the proof of
termination of the original algorithm \SynAF in \cite{Turau:2020b}.
This paper introduces for a given graph $G$ and a broadcasting node
$v_0$ the bipartite auxiliary graph $\G(v_0)$ and shows that
executions of \SynAF on $G$ and $\G(v_0)$ are tightly coupled.
$\G(v_0)$ is a double cover of $G$ that consists of two copies of $G$,
where the cross edges with respect to $v_0$ are removed. Each cross
edges is replaced by two edges leading from one copy of $G$ to the
other. Fig.~\ref{fig:ext_G} depicts $\G(v_0)$ for the graph shown in
Fig.~\ref{fig:unlimited} (see Def.~3 in \cite{Turau:2020b} for
details).

\begin{figure}[h]%
  \hfill
    \includegraphics[scale=0.95]{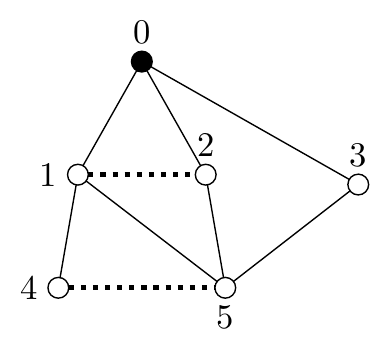}
  \hfill
    \includegraphics[scale=0.95]{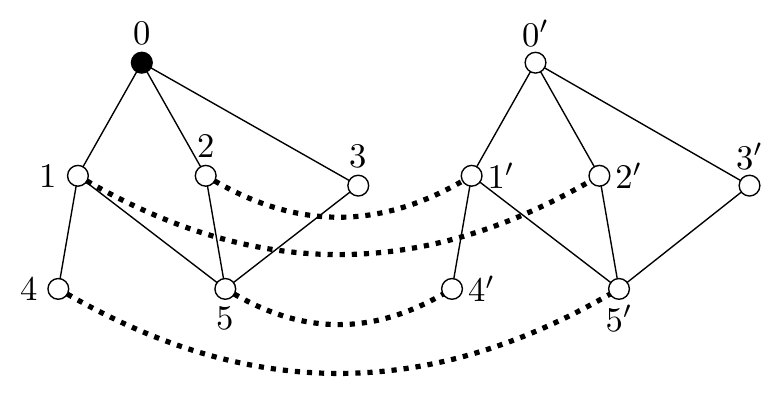}
  \hfill\null
  \caption{The dashed lines on the left show the cross edges of $G$
    ($v_0$ is the broadcasting node). The graph $\G(v_0)$ is shown on
    the right, dashed edges are the replacement
    edges.\label{fig:ext_G}}
\end{figure}

An important observation is that $\G(v_0)$ is bipartite and that in
every round of \SynAF all nodes that send messages belong to one of
the two partitions of nodes. Fig.~\ref{fig:ext_Bipar} shows the
partitioning of the nodes of $\G(v_0)$ for the graph in
Fig.~\ref{fig:ext_G}. An analysis of the execution of
Fig.~\ref{fig:unlimited} shows that in some rounds, nodes from both
partitions forward the message (e.g., in round $3$).

\begin{figure}[h]%
  \hfill
    \includegraphics[scale=0.99]{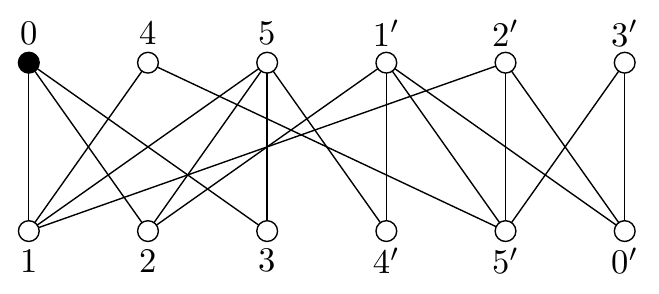}
  \hfill\null
  \caption{Concurrently forwarding nodes in \SynAF either all belong
    the top or bottom row.\label{fig:ext_Bipar}}
\end{figure}

\subsection{Algorithm \SynAFI}
The last observation leads to the following extension of \SynAF for
intermittent availabilities. If a message cannot be forwarded in the
current round, it will be postponed until the next available round
with the same parity, i.e., if the blocked round is odd (resp.\ even),
the message will be forwarded in the next available odd (resp.\ even)
round. This approach guarantees that as in \SynAF all nodes that
concurrently send messages belong to same of the two node sets.
Alg.~\ref{alg:intermitted} shows a realization \SynAFI of this idea.
Compared to \SynAF the new algorithm maintains two sets for the
senders of the message in the variable $M$, one for messages that
arrive in odd rounds and one for even rounds. The parity is maintained
by the Boolean variable $\pr$. The initialization of $\pr$ does not
need be the same for all nodes. The symbol $\bot$ is used to indicate
that no message has arrived in rounds with the specified parity. This
is needed to distinguish this situation from the case that a node
wants to broadcast a message, in this case $M(\pr)$ is assigned the
empty set. If we insert a node $w$ into $M(\pr)$ when $M(\pr)=\bot$
then $M(\pr)=\{w\}$ afterwards. Messages sent in round $i$ are
received in round $i+1$. Hence, in round $1$ no message is received.

\SetProgSty{}
 \begin{algorithm}
   Initialization\\
   ~~~~~\pr := true\;
   ~~~~~$M(\true) := M(\false) := \bot$\;
   \BlankLine
   \BlankLine
   Upon receiving message $m$ from $w$:\\
   ~~~~~$M(\pr).add(w)$\;
  \BlankLine

  \BlankLine
  \If{channel is available and $M(\pr)\not=\bot$} {
  \lForAll{$u\in N(v)\setminus M(\pr)$}{send($u,m$)}
  $M(\pr) := \bot$\;}
   \BlankLine
   \BlankLine
  At the end of each round\\
   ~~~~~$\pr := \neg \pr$\;
   \BlankLine
   \BlankLine
   {\bf function} broadcast($m$)\\
   ~~~~~$M(\pr) := \emptyset$\;
\caption{Algorithm \SynAFI distributes a message $m$ in the
  graph $G$}\label{alg:intermitted}
\end{algorithm}

Fig.~\ref{fig:ex_intermitted} shows an execution of algorithm \SynAFI
for the graph of Fig.~\ref{fig:unlimited}, given that node $v_2$
(resp.\ $v_3$) cannot send in rounds $2$ to $4$ (resp.\ $2$). The
execution terminates after round $5$, with no indeterminacy the
algorithm would terminate in $4$ rounds (see
App.~\ref{app:exec-algor-synaf}).

\begin{figure}[h]%
  \hfill
    \includegraphics[scale=0.92]{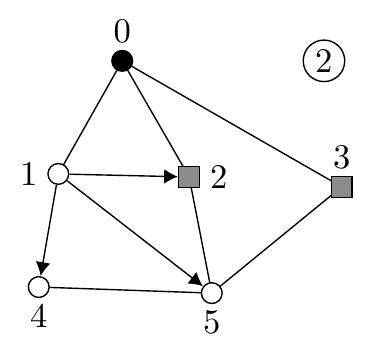}
  \hfill
    \includegraphics[scale=0.92]{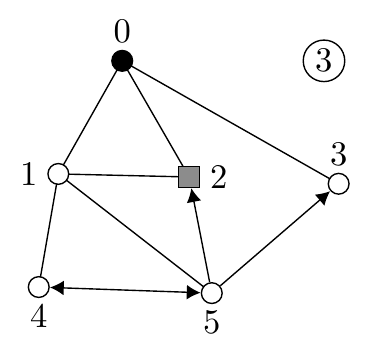}
  \hfill
    \includegraphics[scale=0.92]{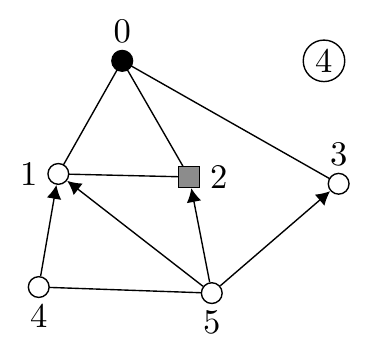}
  \hfill
    \includegraphics[scale=0.92]{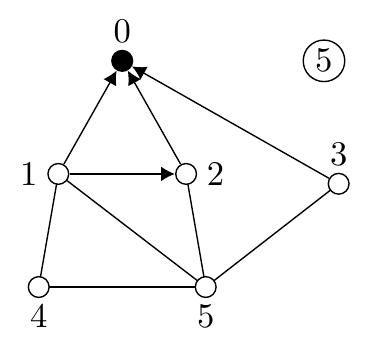}
  \hfill\null
  \caption{Execution of \SynAFI for the graph of
    Fig.~\ref{fig:unlimited}. Round $1$ is the same as in
    Fig.~\ref{fig:unlimited}. In round $6$ node $v_2$ does not need to
    forward the message because, it received messages from all
    neighbors in odd rounds ($1,3,5$). Whereas $v_2$ has to
    send a message to $v_0$ in round $5$ because it only received the
    message from $v_1$ and $v_5$ in even rounds $2$ and
    $4$.}\label{fig:ex_intermitted}
\end{figure}

Clearly this extension of \SynAF is no longer stateless, but because
of message buffering no stateless algorithm can handle channel
unavailabilities.

\subsection{Correctness and Complexity of Algorithm \SynAFI}
\label{sec:proof-corr-synafi}

To formally describe a node's channel availability for message
forwarding the concept of an availability scheme is introduced. Let
$A: V \times \mathbb{N} \longrightarrow \{true,false\}$ be a function.
Node $v$ can send a message in round $c_v$ only if $A(v,c_v) = true$.
$A$ is called an {\em availability scheme} for $G$ and $v_0$ if the
number of pairs $(v,i)\in V \times \mathbb{N}$ with $A(v,i)=\false$ is
bounded by a constant $c$. Note that this concept is only used in the
formal proof. Nodes do not need to have a common round counter. The
availability scheme for Fig.~\ref{fig:unlimited} is
$A(v_2,2)=A(v_2,3)=A(v_2,4)=A(v_3,2)=\false$ and $\true$ otherwise.
WLOG we always assume that $A(v_0,1) = true$.

For a given availability scheme $A$ we construct a directed
bipartite graph $\B_A(v_0)$ such that the execution of \SynAFI on $G$
with respect to $A$ is equivalent to the execution of amnesiac
flooding \SynAF on $\B_A(v_0)$. The starting point for the
construction of $\B_A(v_0)$ is the double cover $\G(v_0)$ of $G$ as
defined in the last section. To keep the notation simple we will omit
the reference to the originating node $v_0$ and refer to the two
graphs as $\B_A$ and \G.

First we extend the definition of the availability scheme $A$ to all
nodes of $\G$, i.e.,
$A: V \cup V' \times \mathbb{N} \longrightarrow \{\true,\false\}$. For
each node $v'\in V'$ let $A(v',i) = A(v,i)$ for all
$i \in \mathbb{N}$. The nodes of $\B_A$ are of two different types:
copies of nodes of \G and so called {\em dummy nodes}. We define
$\B_A$ inductively, layer by layer. There can be copies of the same
node $v$ of \G on several layers of $\B_A$, but the nodes of a single
layer of $\B_A$ are copies of different nodes of \G. Therefore, we do
not cause ambiguity when we denote the copies of the nodes by their
original names. The construction of $\B_A$ is based on a function
$\textit{originator}$, that assigns to each node $v$ of $\B_A$ a set
of neighbors of $v$ in \G. This function is also defined recursively.

Layer $0$ of $\B_A$ consists of copy of $v_0$ with
$\textit{originator}(v_0)=\emptyset$. Layer $1$ consists of copies of
the neighbors of $v_0$ in \G, these are also the neighbors of $v_0$ in
$G$. All layer $1$ nodes are successors of $v_0$ and the
$\textit{originator}$ of these nodes is $\{v_0\}$. Next assume that
layers $0$ to $i$ with $i\ge0$ are already defined including the
function $\textit{originator}$. We first define the nodes of layer
$i+1$ and afterwards the function originator. For each node of layer
$i$ we also define the successors. We do this first for nodes which
are copies of nodes of \G and afterwards for dummy nodes.

Let $v$ be a node of layer $i$ that is a copy of a node of \G. If
$\textit{originator}(v)=N_{\G}(v)$ then $v$ has no successor in layer
$i+1$. Assume $\textit{originator}(v)\not=N_{\G}(v)$. First consider
the case $A(v,i+1)=\textit{true}$. Let
$U = N_{\G}(v) \setminus \textit{originator}(v)$. For each $u \in U$
we do the following: If layer $i+1$ already contains a copy of $u$
then we make it a successor of $v$. Otherwise, we insert a new copy of
$u$ into layer $i+1$ and make it a successor of $v$. If
$A(v,i+1)=\textit{false}$ then we create a new dummy node, insert it
into layer $i+1$, and make it the single successor of $v$. Finally,
let $v$ be a dummy node of layer $i$ and $w$ its single predecessor in
layer $i-1$. If layer $i+1$ already contains a copy of $w$ then we make
it a successor of $v$. Otherwise, we create a new copy of $w$, insert
it into layer $i+1$, and make it the successor of $v$.

To define $\textit{originator}$ for each node $v$ of layer $i+1$ let
$\mathit{pred}(v)$ be the set of predecessors of a node $v$ in
$\cal B$. With $\mathit{pred_d}(v)$ we denote the dummy nodes in
$\mathit{pred}(v)$. Since dummy nodes only have a single predecessor
we denote the predecessor in this case  also by
$\mathit{pred}(v)$. If $v$ is not a dummy node then we define
\[\textit{originator}(v) =  \bigcup_{w \in \mathit{pred}_d(v)}
  \!\!\!\!\!\!\!\!\textit{originator}(w)\,\,\ \cup\,\,\,
  \mathit{pred}(v) \!\setminus\! \mathit{pred}_d(v)\] otherwise
$\textit{originator}(v)= \textit{originator}(\textit{pred}(v))$. Note
that $\B_A$ is bipartite, since nodes of the same layer are not
connected. Fig.~\ref{fig:ext_B} shows the graph $\B_A$ for the graph
of Fig.~\ref{fig:unlimited} and availability scheme $A$. The dummy
nodes are labeled $a$ to $d$. We have
$\textit{originator}(a)=\textit{originator}(b)= \{v_0\}$,
$\textit{originator}(c)=\{v_1\}$, and
$\textit{originator}(d)=\{v_0,v_5\}$. Also,
$\textit{originator}(v_2)=\{v_0,v_5,v_{1'}\}$ in layer $5$.

\begin{figure}[h]%
  \hfill
    \includegraphics[scale=0.99]{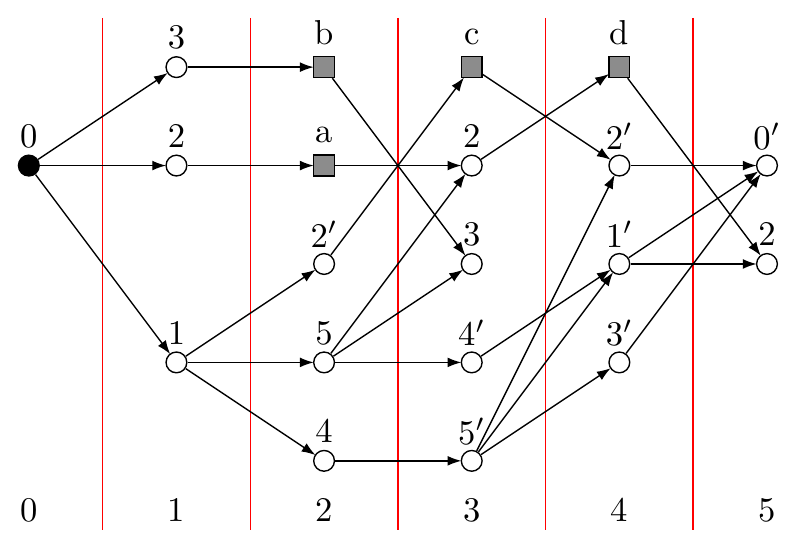}
  \hfill\null
  \caption{The graph $\B_A$ for the availability scheme $A$ has four
    dummy nodes.\label{fig:ext_B}}
\end{figure}

We orient the edges of \G by executing a breadth-first search starting
in $v_0$. The union of the successors and predecessors of a node in \G
are precisely the neighbors of the node in $G$. The next lemma follows
from Lemma 5 of \cite{Turau:2020b}.

\begin{lemma}\label{lem:nonstop}
  Let $v$ be a node of layer $i\ge 0$ of \G. The predecessors of $v$
  in \G are copies of the nodes in $G$ that send in round $i$ of an
  execution of \SynAF a message to $v$ and the successors of $v$ in \G
  receive a message from $v$ in round $i+1$.
\end{lemma}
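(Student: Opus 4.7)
Since the statement is declared to follow from Lemma~5 of \cite{Turau:2020b}, my plan is to record the short induction that translates that result into the present BFS-orientation language on \G. The key idea is that the BFS layer of a node in \G is (up to a fixed offset) the round in which \SynAF delivers $m$ to that node, and that the layer-to-layer directed edges of \G precisely encode \SynAF's per-round forwarding rule.

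I would proceed by induction on $i$. In the base case, $v_0$ sits alone on layer $0$ and has no predecessors, so the first claim is vacuous; its successors in \G are exactly its $G$-neighbors (no cross edge can be incident to $v_0$, as all its neighbors are at distance $1$), and by the first line of Alg.~\ref{alg:af} these are all sent $m$ in round $1$.

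For the inductive step, fix $v$ at layer $i \ge 1$. Applying the IH to each \G-predecessor $u$ of $v$ (necessarily at layer $i-1$), $u$'s successors in \G, including $v$, receive $m$ from $u$ in round $i$. Conversely, any node $w$ in $G$ that sends to $v$ in round $i$ must itself have received $m$ by round $i-1$, hence by IH lies at layer $i-1$ of \G; since $w \sim v$ in $G$, the construction of \G forces $w$ to be a \G-predecessor of $v$ (via an ordinary BFS edge if $d_G(v_0,w)=i-1$, and via a replacement edge if $(v,w)$ is a cross edge of $G$). Having matched the two sets, the \SynAF forwarding rule yields the successor half for free: in the round in which $v$ forwards, it sends $m$ to $N(v)$ minus its round-$i$ senders, which under the double-cover identification is exactly its set of \G-successors, and they receive in round $i+1$.

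The only real subtlety is the cross-edge case: in $G$ two nodes at the same distance from $v_0$ joined by a cross edge would send to each other in the same round, each treating the other's message as cancelling a future send. The double-cover construction of \G is designed precisely to linearize this by routing each direction through the opposite copy of $G$, turning each half of a cross edge into an ordinary layer-to-layer edge. Once this structural observation is available---which is exactly the content of Lemma~5 of \cite{Turau:2020b}---the induction above is routine bookkeeping, and the present lemma is obtained by re-packaging that result for the BFS orientation used here.
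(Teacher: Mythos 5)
Your proposal is correct and takes essentially the same route as the paper: both reduce the claim to Lemma~5 of \cite{Turau:2020b} together with the structural facts that the double cover \G linearizes cross edges and never places a node and its copy on the same layer. The paper simply cites that lemma and adds the layer-disambiguation remark, whereas you additionally spell out the induction that the cited lemma encapsulates.
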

\begin{proof}
  Suppose that a node $w$ sends in round $i$ a message to a node $v$.
  By Lemma 5 of \cite{Turau:2020b} $w$ is a node of layer $i-1$ and
  either $v$ or $v'$ is a successor of $w$ in \B or $w'$ is a node of
  layer $i-1$ and $v'$ is a successor of $w$. Note that in \B a node
  of $G$ and its copy cannot be in the same layer. The second
  statement also follows from this lemma. 
\end{proof}

Let $A$ be any availability scheme for $G$ and $v_0$.
Lemma~\ref{lem:lem1} is easy to prove.
\begin{lemma}\label{lem:lem1}
  Let $v$ be a node of \G. For each copy $u$ of $v$ in $\B_A$ we have
  $N_{\G(v_0)}(v)= \textit{originator}(u) \cup\textit{succ}(u)$. If
  none of the predecessors of $v$ in \B is a dummy node then
  $N_{\G(v_0)}(v)= \textit{pred}(u) \cup\textit{succ}(u)$.
\end{lemma}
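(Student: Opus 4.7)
The plan is to induct on the layer index $i$ of the copy $u$ in $\B_A$, using the recursive layer-by-layer construction. Throughout I identify each non-dummy node of $\B_A$ with its underlying $\G$-node, and treat the single dummy successor created when the channel is blocked as a placeholder for the whole set $N_{\G(v_0)}(v)\setminus\textit{originator}(u)$ of neighbors whose forwarding $u$ has had to postpone; under this convention the asserted equality is a literal set equality in $\G$. The base case $i=0$ is immediate: $\textit{originator}(v_0)=\emptyset$ and, since $A(v_0,1)=\true$, the layer-$1$ successors of $v_0$ are exactly the copies of $N_{\G(v_0)}(v_0)$.

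For the inductive step consider a non-dummy copy $u$ of $v$ at layer $i+1$ and split the predecessors of $u$ into the non-dummy and dummy ones. A non-dummy predecessor $p$ with underlying $\G$-node $w$ only had the layer-$(i+1)$ copy of $v$ installed as its successor because $v \in N_{\G(v_0)}(w) \setminus \textit{originator}(p)$, which forces $w \in N_{\G(v_0)}(v)$. A dummy predecessor $d$ at layer $i$ was created because its unique layer-$(i-1)$ predecessor $u''$ could not send, and the successor rule for dummies forces $u$ to be a copy of the same $\G$-node as $u''$; thus $u''$ is itself a copy of $v$ and $\textit{originator}(d)=\textit{originator}(u'')$, which by the inductive hypothesis lies in $N_{\G(v_0)}(v)$. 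Plugging both observations into the formula
\[\textit{originator}(u) \;=\; \bigcup_{w \in \mathit{pred}_d(u)} \!\!\!\textit{originator}(w) \;\cup\; \bigl(\mathit{pred}(u)\setminus \mathit{pred}_d(u)\bigr)\]
yields $\textit{originator}(u) \subseteq N_{\G(v_0)}(v)$.

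For the reverse inclusion I would use the successor construction at $u$: either $A(v,i+2)=\true$ and the successors of $u$ are exactly the copies of $N_{\G(v_0)}(v)\setminus\textit{originator}(u)$, or $A(v,i+2)=\false$ and the single dummy successor stands in for this same set under our convention. In both cases $\textit{succ}(u)$ contributes precisely $N_{\G(v_0)}(v)\setminus\textit{originator}(u)$, completing the first equality. The second assertion of the lemma is then an immediate specialization: when no predecessor of $u$ is a dummy, $\mathit{pred}_d(u)=\emptyset$ and the originator formula collapses to $\textit{originator}(u)=\mathit{pred}(u)=\textit{pred}(u)$, so $N_{\G(v_0)}(v)=\textit{pred}(u)\cup\textit{succ}(u)$ follows from the first part.

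The main obstacle is the bookkeeping around dummy chains: one must verify that every dummy ancestor of the non-dummy $u$ is itself descended from a copy of $v$ and carries the same originator forward through possibly several consecutive blocked rounds, so that the recursive originator formula only ever contributes elements of $N_{\G(v_0)}(v)$ to $\textit{originator}(u)$. This reduces to a short secondary induction on the length of a maximal dummy chain ending at a layer-$i$ dummy, but it is the one piece that is not direct unpacking of definitions.
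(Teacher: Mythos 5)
The paper gives no proof of this lemma (it is merely asserted to be ``easy to prove''), so there is no argument of the author's to compare against; your layer-by-layer induction is the natural way to supply one, and it is essentially correct. Two remarks. First, you are right that the statement cannot be read completely literally: if $A(v,i+1)=\false$ for a copy $u$ of $v$ on layer $i$, then $\textit{succ}(u)$ is a single dummy node and $\textit{originator}(u)\cup\textit{succ}(u)$ omits all of $N_{\G}(v)\setminus\textit{originator}(u)$ --- for instance the copy of $v_2$ on layer $1$ in Fig.~\ref{fig:ext_B}, whose only successor is the dummy $a$. Your convention of letting that dummy successor stand for the residual neighbor set is one sensible repair; an equivalent one, closer to how the lemma is actually invoked in the proof of Lemma~\ref{lem:one-to-one}, is to assert the first identity only for copies whose successors are non-dummy. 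Either way, say explicitly that you are proving this corrected reading. Second, the ``secondary induction on maximal dummy chains'' that you flag as the delicate point is unnecessary: by construction a dummy node is only ever created as the successor of a non-dummy copy, and a dummy's successor is a (non-dummy) copy of its unique predecessor, so dummy nodes never occur on consecutive layers. Repeated blocking produces an alternating pattern copy/dummy/copy/$\dots$, and each non-dummy copy in that pattern lies on a strictly smaller layer, so your main (strong) induction on the layer index already absorbs it --- you only need the inductive hypothesis at layer $i-1$ for the predecessor $u''$ of a dummy predecessor, which is available. With those two points tidied up, the argument is complete: the forward inclusion $\textit{originator}(u)\subseteq N_{\G}(v)$ from the predecessor case split, the reverse inclusion from the successor rule, and the second assertion as the specialization $\mathit{pred}_d(u)=\emptyset$ all go through as you describe.
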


To illustrate the last lemma we consider the execution from
Fig.~\ref{fig:ex_intermitted} and the corresponding graph $\B_A$ in
Fig.~\ref{fig:ext_B}. Let $i=4$ and consider node $v_2$. The copy of
$v_2$ on layer $4$ is called $v_{2'}$. Fig.~\ref{fig:ext_B} shows that
$\textit{originator}(v_2) = \{v_{5},v_1\}$. From
Fig.~\ref{fig:ex_intermitted} we see that node $v_2$ receives a
message from node $v_1$, i.e., $v_1\in v_2.M(\pr)$. Since
$A(v_2,3)=\false$ node $v_2$ could not send a message in round $3$.
Hence the sender $v_1$ of the message received in round $3$ is still
in $v_2.M(\pr)$. This yields $v_2.M(\pr)= \{v_{5},v_1\}$, since
$A(v_2,1)=\true$.

For an availability scheme $A$ and $k\ge 0$ we define a new
availability scheme $A_k$ as follows. We consider the nodes of $\B_A$
in any arbitrary but fixed order and define a total order on the set
of pairs $(v,i)\in V \times \mathbb{N}$ with $A(v,i)=\false$ as
follows: $(v,i) < (w,j)$ if and only if $i < j$ or $i= j$ and $v<w$.
Then we define $A_k(v,i)=\false$ for all but the first $k$ pairs
$(v,i)$, i.e., $A_k$ has value $\false$ for exactly $k$ pairs $(v,i)$.
Note that there exists $c>0$ such that $A=A_{c}$.

\begin{lemma}\label{lem:one-to-one}
  There is a one-to-one mapping between the edges of \G and those
  edges of $\B_A$ that are not incident to a dummy node.
\end{lemma}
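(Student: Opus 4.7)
The plan is to define the obvious candidate bijection and verify its properties via two interleaved inductive invariants. For each non-dummy edge $(v',u')$ of $\B_A$ with $v',u'$ non-dummy copies of $\G$-nodes $v,u$, set
\[
\phi\colon (v',u') \longmapsto \{v,u\}.
\]
Well-definedness is immediate: the only mechanism creating a non-dummy edge from a non-dummy $v'$ at layer $i$ is the processing of $v'$'s successors with $A(v,i+1)=\true$ and $u\in N_\G(v)\setminus\textit{originator}(v')$, which forces $\{v,u\}\in E(\G)$.

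The core of the proof is establishing, by strong induction on the layer $i$, two interleaved invariants. The \emph{originator invariant} states: for every non-dummy copy $v'$ of $v$ at layer $i$, the set $\textit{originator}(v')$ equals the set of $\G$-neighbors $w$ of $v$ such that $\{v,w\}$ has already been realized by a non-dummy edge incoming to some copy of $v$ in the blocking ancestry of $v'$ (the earlier copies of $v$ reaching $v'$ through dummy chains, inclusive of $v'$). The \emph{chain uniqueness invariant} states: for every $\G$-node $x$, all non-dummy copies of $x$ at layers up to $i$ belong to a single chain, i.e., are pairwise linked by alternating non-dummy/dummy paths with underlying $\G$-node $x$. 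The originator invariant follows by unfolding
\[
\textit{originator}(v') = \bigl(\mathit{pred}(v')\setminus\mathit{pred}_d(v')\bigr) \cup \bigcup_{d\in\mathit{pred}_d(v')}\textit{originator}(d),
\]
using the structural fact that a dummy's successor is always a copy of its non-dummy predecessor's underlying $\G$-node, so dummy chains link only copies of the same $\G$-node and originators propagate monotonically along a chain. Chain uniqueness is argued by contradiction: any hypothetical second chain of $x$ starting at layer $i$ would be initiated by some $w'$ at layer $i-1$ with $x\notin\textit{originator}(w')$; but since $x$'s first chain has already realized $\{x,w\}$, there is a copy $w''$ of $w$ at a layer $\le i-1$ which either terminated $w$'s chain before layer $i-1$ (incoming case, ruling out the existence of $w'$ at $i-1$) or has $x\in\textit{originator}(w'')$ (outgoing case, forcing $x\in\textit{originator}(w')$ by monotonicity along $w$'s inductively unique chain). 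The bipartite parity inherited from $\G$ (copies of a $\G$-node occupy layers of a fixed parity) is essential to bound the relevant layers.

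With the two invariants in place, injectivity of $\phi$ follows: a second non-dummy edge realizing $\{v,u\}$ would either contradict monotonicity of originator along the unique chain of $v$ or $u$, or would force a second chain of one of them. Surjectivity holds because each unique chain of $v$ extends until its terminal copy $v^{(m)}$ either has $\textit{originator}(v^{(m)})=N_\G(v)$ (realizing all incident $\G$-edges as incoming) or sends unblocked to every neighbor in $U(v^{(m)})=N_\G(v)\setminus\textit{originator}(v^{(m)})$ (realizing the remainder as outgoing), so every $\G$-edge is hit by $\phi$. The main obstacle is the chain uniqueness step, and the bipartite structure of $\G$ together with the consolidation rule ``if layer $i+1$ already contains a copy of $u$, we make it a successor of $v$'' are the key ingredients that make it go through.
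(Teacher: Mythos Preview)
Your argument is correct and takes a genuinely different route from the paper's. The paper does not analyse $\B_A$ directly; instead it introduces the intermediate availability schemes $A_0,A_1,\ldots,A_c=A$ obtained by switching on the blocking pairs one at a time, and proves the lemma by induction on $k$. The key step compares $\B_{A_{k-1}}$ with $\B_{A_k}$: the two graphs agree outside a ``cone'' rooted at the newly blocked copy of $v$, described by the sets $X_j$ (nodes whose only upstream path goes through $v$) and $\Xbar_j$ (nodes that also have other predecessors). The paper then observes that passing from $\B_{A_{k-1}}$ to $\B_{A_k}$ merely \emph{reverses} the non-dummy edges between $X_{j}$ and $\Xbar_{j+1}$ (and between $v$ and $\Xbar_1$) and adds the two edges incident to the new dummy node, so the bijection with the edges of $\G$ is preserved.

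By contrast, you run a single strong induction on the layer index of $\B_A$, carrying the originator invariant and chain uniqueness simultaneously. This is more structural and avoids the auxiliary graphs $\B_{A_k}$ altogether; your use of the bipartite parity of $\G$ and of the consolidation rule (at most one copy per layer) is exactly what makes the chain-uniqueness step close. What you lose is the quantitative byproduct the paper extracts from its cone analysis: each blocking pair increases the depth of $\B_A$ by at most~$2$, which is precisely the ingredient needed later for the $+2f$ term in Theorem~\ref{theo:main1}. If you want your approach to feed into that theorem as well, you would still need a separate argument bounding how far a chain can extend in terms of the number of blockings.
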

\begin{proof}
  It suffices to prove that the lemma holds for each $A_k$ with
  $k\ge 0$. The proof is by induction on $k$. If $k=0$ then the result
  is trivially true since $\B_{A_0}=\G$. Assume the theorem is true
  for $k\ge 1$. Consider the graph $\B_{A_{k-1}}$. Let $(v,i)$ be the
  $k^{th}$ pair with $A(v,i)=\false$. If layer $i-1$ of $\B_{A_{k-1}}$
  contains no copy of $v$ then $\B_{A_{k-1}}=\B_{A_k}$ and we are
  done. Suppose there exists a copy of $v$ on layer $i-1$ of
  $\B_{A_{k-1}}$. We inductively define two sequences of sets $X_j$,
  $\Xbar_j$ ($j\ge 1$) of nodes of $\B_{A_{k-1}}$ (see
  Fig.~\ref{fig:ext}). Nodes of $X_j$, $\Xbar_j$ are in layer $i-1+j$
  of $\B_{A_{k-1}}$. $X_1$ is the set of nodes of layer $i$ that have
  $v$ as the single predecessor in layer $i-1$ and
  $\Xbar_1= \mathit{succ}(v)\setminus X_1$, where $\mathit{succ}(v)$
  denotes the successors in $\B_{A_{k-1}}$. Thus, each node in
  $\Xbar_1$ has besides $v$ another predecessor in layer $i-1$.
  Suppose we already defined $X_{j-1}, \Xbar_{j-1}$. Then $X_j$ is the
  set of nodes of layer $i-1+j$ that have only predecessors in
  $X_{j-1}$, i.e., $\mathit{pred}(X_j) \subseteq X_{j-1}$. $\Xbar_j$
  consists of those nodes of layer $i-1+j$ that have predecessors in
  $X_{j-1}$ and in $\Xbar_{j-1}$, i.e., for each $w\in\Xbar_j$ we have
  $pred(w)\cap X_{j-1}\not=\emptyset$ and
  $pred(w)\cap \Xbar_{j-1}\not=\emptyset$. Hence,
  $\mathit{succ}(X_{j-1}) = X_j\, \dot\cup\, \Xbar_j$. Note that none
  of the nodes of $X_j$, $\Xbar_j$ are dummy nodes, therefore
  $N_{\G}(u)= \textit{pred}(u) \cup\textit{succ}(u)$ for each
  $u\in X_j \cup\Xbar_j$ by Lemma~\ref{lem:lem1}. Since the theorem is
  true for $A_{k-1}$, there exist $t$ such that $X_t=\emptyset$. Note
  that $X_j\not=\emptyset$ for $j=1,\ldots t-1$ while $\Xbar_{j}$ can
  be empty for any $j$.

  \begin{figure}[h]%
  \hfill
    \includegraphics[scale=0.99]{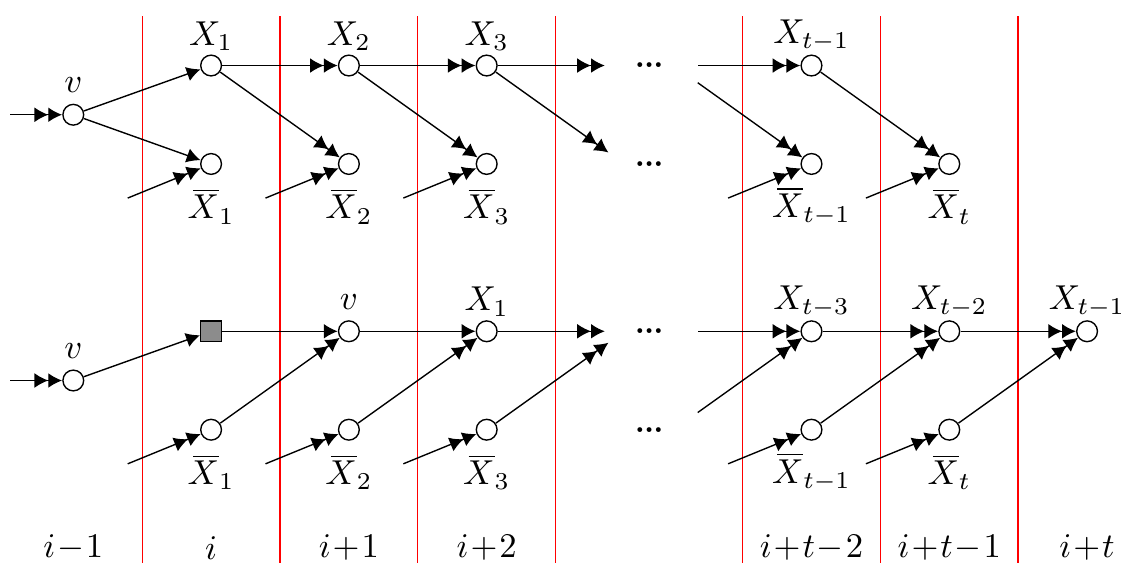}
  \hfill\null
  \caption{The top row illustrates the definition of  $X_{j}$
    and $\Xbar_{j}$ for $\B_{A_{k-1}}$. The lower row displays the
    changes in $\B_{A_{k}}$ compared with $\B_{A_{k-1}}$. The last row
    indicates the number of the layer. The symbol \doublearrow
    indicates that there can be several edges.}\label{fig:ext}
\end{figure}

Next, we show how $\B_{A_k}$ can be derived from $\B_{A_{k-1}}$. The
two graphs coincide completely in the first $i-1$ layers. In
subsequent layers nodes that are not reachable from $v$ in layer $i-1$
also are identical. The single successor of $v$ in layer $i$ is the
dummy node. This node itself has as successor a copy of $v$ on layer
$i+1$. Clearly this copy of $v$ is also the successor of all nodes in
$\Xbar_{1}$ in layer $i$. The successors of the copy of $v$ on layer
$i+1$ are copies of the nodes of set $X_1$. Nodes in $\Xbar_{2}$ on
layer $i+1$ are the predecessors of nodes in $X_1$. All these
statements are an immediate consequence of Lemma~\ref{lem:lem1}.
Similarly it follows that each layer $i-1+j$ for $j\ge 3$ contains
copies of the nodes of set $X_{j-2}$. Their predecessors are copies of
the nodes in $X_{j-3}$ and $\Xbar_{j-1}$.

Thus, in $B_{A_k}$ some edges from $B_{A_{k-1}}$ are reversed: The
orientation of edges from $X_j$ to $\Xbar_{j+1}$ and from $v$ to
$\Xbar_{1}$ is reversed. This analysis also shows that $B_{A_k}$ only
has two additional edges, those adjacent to the new dummy node. In the
worst case, $B_{A_k}$ consists of two more layers compared to
$B_{A_{k-1}}$.
\end{proof}

To ease the formulation of the next lemma we introduce another
definition. Let $u$ be a node of $G$. For a copy of $u$ in layer $i$
of $\B_A$ we denote the originators in $\B_A$ of this copy of $v$ by
$\textit{originator}^{i}(v)$. Furthermore, the set $M(\pr)$ of node $u$
immediately before checking channel availability in round $i$ during
an execution of \SynAFI on $G$ is denoted by $v.M^i(\pr)$.

\begin{lemma}\label{lem:equiv}
  Let $u$ be a non-dummy node of layer $i$ of $\B_A$. Then
  $u.M^i(\pr)=\textit{originator}^i(u)$.
\end{lemma}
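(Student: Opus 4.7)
The plan is to prove Lemma~\ref{lem:equiv} by induction on the layer number $i$ of $\B_A$, mirroring the inductive construction of $\B_A$ itself. For the base case, layer $0$ consists only of $v_0$, and the call \textbf{broadcast}$(m)$ sets $M(\pr):=\emptyset$ before the first availability check, matching $\textit{originator}^{0}(v_0)=\emptyset$; the initial layer(s) also serve to fix the correspondence between layers of $\B_A$ and rounds of \SynAFI.

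For the inductive step, let $u$ be a non-dummy node of layer $i>0$ and assume the claim for all non-dummy nodes of lower layers. First I would split $\mathit{pred}(u)$ into the dummy predecessors $\mathit{pred}_d(u)$ and the non-dummy predecessors $\mathit{pred}(u)\setminus\mathit{pred}_d(u)$. By an analog of Lemma~\ref{lem:nonstop} transferred to $\B_A$, each non-dummy predecessor $w$ is a $G$-neighbor of $u$ that successfully sends $m$ to $u$ in the round attached to layer $i$; via the upon-receive clause of Alg.~\ref{alg:intermitted} this $w$ is inserted into $u.M^{i}(\pr)$ with the current value of $\pr$. Hence the non-dummy contribution to $u.M^{i}(\pr)$ is precisely $\mathit{pred}(u)\setminus\mathit{pred}_d(u)$, matching the second summand in the defining formula for $\textit{originator}(u)$.

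Each dummy predecessor $d\in\mathit{pred}_d(u)$ has a single predecessor on layer $i-2$, which by construction is an earlier copy $u'$ of the same node $u$; by the recursive definition, $\textit{originator}(d)=\textit{originator}(u')$, and these originators are inherited by $u$. Algorithmically, $d$ encodes that $u$'s channel was unavailable in the round attached to layer $i-1$, so the reset $M(\pr):=\bot$ did not fire and the contents of $u'.M^{i-2}(\pr)$ persist into $u.M^{i}(\pr)$. The crucial observation is that each dummy inserts exactly two layers between consecutive copies of $u$, so $\pr$ takes the same value at both events; the inductive hypothesis applied to $u'$ then gives $u'.M^{i-2}(\pr)=\textit{originator}^{i-2}(u')=\textit{originator}(d)$. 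Combining the two contributions through the defining formula for $\textit{originator}(u)$ yields $u.M^{i}(\pr)=\textit{originator}^{i}(u)$.

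The main obstacle I anticipate is the bookkeeping at the interface between $\B_A$ and \SynAFI. One must (i) fix the correspondence between layers of $\B_A$ and rounds of \SynAFI so that layer $i$ is always hit in a round of the correct parity, and (ii) justify that \emph{all} reception events contributing to $u.M^{i}(\pr)$ arrive with the same $\pr$ --- this is exactly the parity-preservation invariant on $\B_A$ already sketched for \SynAF (concurrent senders belong to one partition) and whose preservation under postponement motivated introducing $\pr$ in the first place. Once these two points are verified as a preamble, the induction reduces to substitution in the defining formula for $\textit{originator}$.
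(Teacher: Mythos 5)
Your route is genuinely different from the paper's: the paper proves the lemma by induction on $k$, the number of pairs with $A(v,i)=\false$, reusing the sets $X_j,\Xbar_j$ and the local surgery from the proof of Lemma~\ref{lem:one-to-one} to compare executions on $\B_{A_{k-1}}$ and $\B_{A_k}$ (base case $k=0$ via Lemma~\ref{lem:nonstop}, then the three cases $u=v$, $u\in X_j$, $u\in\Xbar_j$). You instead induct on the layer number of a single fixed $\B_A$, following its recursive construction. That decomposition is natural and, if completed, would arguably be cleaner than the paper's, since it avoids juggling two availability schemes at once.

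However, there is a genuine gap in your inductive step. You invoke ``an analog of Lemma~\ref{lem:nonstop} transferred to $\B_A$'' to claim that the non-dummy predecessors of $u$ in layer $i-1$ are exactly the nodes that send $m$ to $u$ in the corresponding round. For $\B_A$ and \SynAFI that statement is precisely Lemma~\ref{lem:depth}, which the paper \emph{derives from} Lemma~\ref{lem:equiv}; Lemma~\ref{lem:nonstop} itself only covers $\G$ and \SynAF, i.e., the case without dummy nodes. So as written your argument is circular. The fix is to strengthen the induction hypothesis and prove two statements simultaneously by induction on the layer: (a) a node $w$ with a copy in layer $i-1$ sends in the corresponding round exactly to the successors of that copy (which follows from $w.M^{i-1}(\pr)=\textit{originator}^{i-1}(w)$ together with the rule that successors are $N_{\G}(w)\setminus\textit{originator}(w)$ when $A(w,\cdot)=\true$), and (b) $u.M^{i}(\pr)=\textit{originator}^{i}(u)$ for layer-$i$ nodes. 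You also need the completeness half of (a): every send event of round $i$ is accounted for by some layer-$(i-1)$ copy, i.e., no node outside the layered structure holds a non-$\bot$ set $M(\pr)$ of the relevant parity; otherwise $u.M^{i}(\pr)$ could contain senders that are not predecessors of $u$ in $\B_A$. Your parity observation (each dummy inserts exactly two layers, so consecutive copies of $u$ are queried with the same value of $\pr$) is correct and essential, but it does not by itself close this circularity.
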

\begin{proof}
  We use the notation introduced in the proof of
  Lemma~\ref{lem:one-to-one}. As before we prove by induction on $k$
  that the lemma holds for $A_k$. If $k=0$ then the result holds by
  Lemma~\ref{lem:nonstop} since $\B_{A_0}=\G$. Assume the lemma is
  true for $k\ge 1$. We consider the graph $\B_{A_{k-1}}$. Let
  $(v,i)$ be the $k^{th}$ pair with $A(v,i)=\false$. If in layer $i-1$
  of $\B_{A_{k-1}}$ there exists no copy of $v$ then
  $\B_{A_{k-1}}=\B_{A_k}$ and we are done. Suppose there exists a copy
  of $v$ on layer $i-1$ of $\B_{A_{k-1}}$. From Fig.~\ref{fig:ext} we
  see that we only have to consider the cases $u=v$, $u\in X_{j}$, and
  $u\in \Xbar_{j}$. Remember that there are no dummy nodes in $X_j$,
  $\Xbar_j$.

  First consider the case that $u$ is the copy of $v$ in layer $i+1$
  in $\B_{A_k}$ (see Fig.~\ref{fig:ext}). In round $i+1$ in $\B_{A_k}$
  the nodes in $\Xbar_{1}$ do not receive the message from $v$ because
  $A(v,i)=\false$. Since each node in $\Xbar_{1}$ still receives the
  message from another node, each of them must forward the message in
  round $i$ to $v$. Hence,
  $v.M^{i+1}(\pr) = v.M^{i-1}(\pr) \cup \Xbar_{1}$. On the other hand
  $\textit{originator}^{i+1}(v)=\textit{originator}^{i-1}(v) \cup
  \Xbar_{1}$. By induction
  $\textit{originator}^{i-1}(v)=v.M^{i-1}(\pr)$.


  Next consider the case $u\in X_{1}$. Then $u$ is on layer $i+2$ of
  $\B_k$. Since in $\B_{A_{k-1}}$ each node in $X_{1}$ receives in
  round $i$ only the message from $v$, node $v$ sends the message to
  each node in $X_{1}$ in round $i+1$. Furthermore, since for
  $\B_{A_{k-1}}$ each node in $\Xbar_{2}$ received in round $i+1$ a
  message from a node in $X_1$, each node of $\Xbar_{2}$ sends
  $\B_{A_{k}}$ the message to at least one node of $X_1$. In
  particular node $u$ receives in round $i+2$ the message from its
  predecessors in $\Xbar_{2}$ for $\B_{A_{k}}$. Clearly, $u$ does not
  receive the message from any other node. Thus,
  $u.M^{i+2}(\pr) =\textit{originator}^{i+2}(u)$. The cases
  $u\in X_{j}$ with $j>1$ and $u \in \Xbar_{j}$ with $j \ge 1$ can be
  proved similarly.
\end{proof}

\begin{lemma}\label{lem:depth}
  During round $i$ of an executing of \SynAFI on $G$ a node $v$ sends
  the message to a neighbor $w$ if and only if the copy of $v$ in
  layer $i-1$ of $\B_A$ is the predecessor of a copy of $w$ in layer
  $i$ of $\B_A$.
\end{lemma}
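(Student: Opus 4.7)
The plan is to unfold the definitions in Alg.~\ref{alg:intermitted} and in the inductive construction of $\B_A$, using Lemma~\ref{lem:equiv} as the bridge between the dynamic set $v.M^i(\pr)$ of \SynAFI and the static originator set of the corresponding copy of $v$ in $\B_A$. First I would invoke Alg.~\ref{alg:intermitted}: node $v$ sends to neighbor $w$ in round $i$ exactly when (a) $A(v,i)=\true$, (b) $v.M^i(\pr)\neq\bot$, and (c) $w\in N(v)\setminus v.M^i(\pr)$. Let $v^{*}$ denote the copy of $v$ at layer $i-1$ of $\B_A$. Lemma~\ref{lem:equiv}, applied at the appropriate round/layer index, identifies $v.M^i(\pr)$ with $\textit{originator}(v^{*})$; combined with Lemma~\ref{lem:lem1}, which gives $N(v)=N_\G(v)=\textit{originator}(v^{*})\cup \textit{succ}(v^{*})$, condition (c) becomes the statement that $w$ lies among the non-dummy successors of $v^{*}$ at layer $i$.

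Next I would read off the inductive definition of $\B_A$ directly: a non-dummy successor $w$ of $v^{*}$ at layer $i$ is created precisely in the branch where $A(v,i)=\true$, $\textit{originator}(v^{*})\neq N_\G(v)$, and $w\in N_\G(v)\setminus \textit{originator}(v^{*})$; the complementary branch $A(v,i)=\false$ produces only a dummy successor, corresponding exactly to $v$ being unable to send in round $i$. Matching this branch structure against conditions (a)--(c) delivers both directions of the equivalence: the forward direction observes that (a)--(c) force the non-dummy branch and hence generate the $\B_A$-edge; the reverse direction observes that a non-dummy successor can only arise in the same branch, and Lemma~\ref{lem:equiv} then returns the send in \SynAFI.

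The main obstacle is careful bookkeeping on two points. Parity: since Alg.~\ref{alg:intermitted} flips $\pr$ at the end of each round and keeps separate sets $M(\true)$ and $M(\false)$, I must confirm by induction on $i$ that the parity of $\pr$ in round $i$ aligns with the parity of layer $i-1$ in the bipartite graph $\B_A$, so that Lemma~\ref{lem:equiv} is being read against the correct one of the two sets. Existence: whenever $v.M^i(\pr)\neq\bot$, the copy $v^{*}$ of $v$ really does lie on layer $i-1$ of $\B_A$; this follows from tracing the layer-by-layer construction of $\B_A$ backwards from $v_0$ at layer $0$ along the chain of sends (possibly passing through dummy nodes, which is exactly what the construction handles) that first reached $v$.
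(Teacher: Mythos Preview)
Your proposal is correct and follows essentially the same route as the paper: invoke Lemma~\ref{lem:equiv} to identify $v.M(\pr)$ with $\textit{originator}(v)$, then read off the successor clause in the inductive construction of $\B_A$ to get both directions. The paper's own proof is terser---it does not invoke Lemma~\ref{lem:lem1} (the successor set is taken directly from the construction as $N_{\G}(v)\setminus\textit{originator}(v)$) and it silently skips the parity-alignment and layer-existence bookkeeping you flag; your extra care there is warranted but not a different strategy.
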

\begin{proof}
  If during the execution of \SynAFI node $v$ sends messages in round
  $i$ to $w$ then $A(v,i)=\true$ and $N(v)\not=v.M(\pr)$. By the
  Lemma~\ref{lem:equiv} we have
  $w\in N(v)\setminus \textit{originator}(v)$. Thus, by construction
  $w$ is a successor of $v$ in $\B_A$. Conversely, if $w$ is successor
  of $v$ in $\B_A$ then $A(v,i)=\true$ and
  $v\in N(v) \setminus \textit{originator}(v)$. Again
  Lemma~\ref{lem:equiv} gives the desired result.
\end{proof}

The last lemma implies that executing \SynAFI on $G$ is {\em
  equivalent} to executing \SynAF on $\B_A$. The reason is that $\B_A$
is bipartite and executing \SynAF on a bipartite graph starting at the
root is equivalent to synchronous flooding the bipartite graph. This
is formulated in the following theorem.

 \begin{theorem}\label{theo:main1}
   Let $G$ be a graph $G$ and $A$ an availability scheme for $G$. Let
   $f=\abs{\{(v,i)\suchthat A(v,i)=\false\}}$. Algorithm \SynAFI
   delivers a broadcasted message (resp.\ terminates) after at most
   $Diam(G)+ 2f$ (resp.\ $2 Diam(G)+ 2f + 1$) rounds. If $G$ is
   bipartite each message is forwarded $\abs{E}$ times, otherwise
   $2\abs{E}$ times.
 \end{theorem}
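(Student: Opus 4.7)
The plan is to lift the execution of \SynAFI on $G$ to an execution of \SynAF on the auxiliary bipartite graph $\B_A$ by means of Lemma~\ref{lem:depth}, and then read off the three claims from properties of $\B_A$ that are already in place. By Lemma~\ref{lem:depth}, a message traverses the edge $(v,w)$ in round $i$ of \SynAFI on $G$ if and only if some copy of $v$ in layer $i-1$ of $\B_A$ is a predecessor of a copy of $w$ in layer $i$. Because $\B_A$ is bipartite, in the sense that nodes of the same layer are pairwise non-adjacent by construction, \SynAF on $\B_A$ starting at $v_0$ reduces to classical synchronous flooding oriented from earlier to later layers: every non-dummy edge carries the message exactly once, while dummy edges merely represent rounds in which no message is transmitted over the channel.

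For the delivery bound I would show, by induction on $k$, that in $\B_{A_k}$ every node of $G$ has a copy at some layer at most $\epsilon_G(v_0)+2k$. The base case $\B_{A_0}=\G$ follows from Lemma~\ref{lem:nonstop} together with the original analysis of \SynAF in \cite{Turau:2020b}. In the step from $\B_{A_{k-1}}$ to $\B_{A_k}$, the only change is the dummy detour at the affected layer $i-1$, which shifts those nodes that lie in the sets $X_j$ from the proof of Lemma~\ref{lem:one-to-one} back by two layers, while nodes reached through an alternative path (the $\Xbar_j$ side) keep their previous layer. Taking the minimum over all paths, the shallowest copy of any node moves down by at most two per induction step, so delivery occurs within $\epsilon_G(v_0)+2f \le Diam(G)+2f$ rounds by the equivalence of the first step.

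The termination bound follows from the same induction applied to the maximum layer (depth) of $\B_A$. The depth of $\B_{A_0}=\G$ equals the termination round of \SynAF on $G$, which is at most $\epsilon_G(v_0)+Diam(G)+1$ by the result recalled in Section~\ref{sec:state-art}, and the proof of Lemma~\ref{lem:one-to-one} already remarks that each incremental unavailability increases the depth by at most two. Hence the last round of activity in \SynAFI is bounded by $\epsilon_G(v_0)+Diam(G)+1+2f \le 2\,Diam(G)+2f+1$. For the message count, Lemma~\ref{lem:one-to-one} provides a bijection between non-dummy edges of $\B_A$ and the edges of the $v_0$-component of $\G$; since bipartite flooding traverses each such edge exactly once, the number of messages equals that edge count. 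For bipartite $G$ there are no cross edges, so the $v_0$-component of $\G$ is isomorphic to $G$ and contributes $\abs{E}$ messages; otherwise $\G$ is the full double cover with $2\abs{E}$ edges.

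The hardest step I anticipate is the per-node layer-shift invariant underlying the delivery bound: the proof of Lemma~\ref{lem:one-to-one} states only that ``in the worst case'' two additional layers appear when a new unavailability is processed, whereas the delivery bound asks us to control the minimum-layer copy of every node, not merely the overall depth of $\B_A$. Pinning this down cleanly requires a careful case analysis of the $X_j$ versus $\Xbar_j$ decomposition, showing that any affected node moves by exactly two while the remaining nodes keep their original layer, and then summing these per-node shifts across the $f$ inductive steps.
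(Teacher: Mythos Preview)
Your proposal follows essentially the same route as the paper: lift the execution to $\B_A$ via Lemma~\ref{lem:depth}, bound the depth of $\B_A$ using the ``at most two extra layers per unavailability'' remark from the proof of Lemma~\ref{lem:one-to-one} together with the known depth of $\G$ from \cite{Turau:2020b}, and read off the message count from the edge bijection of Lemma~\ref{lem:one-to-one}. The paper's proof is in fact terser than yours and treats only termination and message count explicitly; you are more careful in isolating the delivery bound as requiring a per-node minimum-layer invariant rather than merely a global depth bound, and in separating the bipartite and non-bipartite cases for the message count via the structure of $\G$. That extra care is warranted---the paper's one-line appeal to Lemma~\ref{lem:one-to-one} does not by itself pin down the layer of the \emph{first} copy of each node---so your identified ``hardest step'' is a genuine refinement rather than a detour.
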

 \begin{proof}
   Lemma~\ref{lem:depth} implies that \SynAFI terminates after $d$
   rounds where $d$ is the height of $\B_A$. The proof of
   Lemma~\ref{lem:one-to-one} shows that each pair $(v,i)$ with
   $A(v,i)=\false$ increases the depth by at most $2$. By Thm.~1 of
   \cite{Turau:2020b} the depth of $\G$ is at most $2Diam(G)+1$. By
   Lemma~\ref{lem:one-to-one} and Lemma~\ref{lem:equiv} \SynAFI sends
   $2\abs{E}$ messages.
 \end{proof}

\section{Multi-Source Broadcasts}
\label{sec:multi-source}
A variant of broadcasting is multi-source broadcasting, where several
nodes invoke a broadcast of the same message, i.e., with the same
message id, possibly in different rounds. This problem is motivated by
disaster monitoring: A distributed system monitors a geographical
region. When multiple nodes detect an event, each of them broadcasts
this information unless it has already received this information.
Multi-source broadcasting for the case that all nodes invoke the
broadcast in the same round was already analyzed in
\cite{Turau:2020b}. This variant can be reduced to the case of single
node invoking the broadcast by introducing a virtual source $v^\ast$
connected by edges to all broadcasting nodes.

In this section we consider the general case where nodes can invoke
the broadcasts in arbitrary rounds. First we show that broadcasting
one message with algorithm \SynAF also terminates in this case and that
overlapping broadcasts complement each other in the sense that the
message is still forwarded only $2\abs{E}$ resp. $\abs{E}$ times.
Later we extend this to the case of intermittent channels.

\begin{theorem}\label{thm:multi}
  Let $v_1,\ldots, v_{k}$ be nodes of $G$ that broadcast the same
  message $m$ in rounds $r_1,\ldots,r_{k}$. Each broadcast is invoked
  before $m$ reaches the invoking node. Algorithm \SynAF delivers $m$
  after $Diam(G)$ rounds and terminates after at most $2Diam(G)+1$
  rounds and $m$ is forwarded at most $2\abs{E}$ times.
\end{theorem}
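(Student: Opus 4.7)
The plan is to reduce the multi-source scenario to the single-source case via a virtual-source construction, generalising the same-round reduction already cited from \cite{Turau:2020b}. Assume WLOG that $r_1\le r_2\le\cdots\le r_k$ and $r_1=1$. Construct $G'\supseteq G$ by attaching, for every $i\ge 2$, a fresh path $P_i$ of length $r_i-1$ from $v_1$ to $v_i$, whose internal vertices are new and pairwise disjoint from each other and from $G$. Then consider a single-source execution of \SynAF on $G'$ with initiator $v_1$ starting in round $1$.

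The heart of the argument is the equivalence claim: the projection of this $G'$-execution onto the vertices and edges of $G$ coincides with the multi-source execution of \SynAF on $G$. By hypothesis, no node $v_i$ receives $m$ inside $G$ before round $r_i$, so in the $G'$-execution the first copy of $m$ reaching $v_i$ arrives via $P_i$ at round $r_i$. A straightforward induction on the round shows that up to round $r_i$ the set of $G$-edges carrying $m$ is identical in both executions; hence the set of $G$-neighbours from which $v_i$ has received $m$ by round $r_i$ also coincides, and $v_i$ therefore forwards $m$ to the same $G$-neighbours in both runs (in $G'$ the path neighbour is already in the set $M$, so $v_i$ does not send back along $P_i$). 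The auxiliary path edges do not interfere because they are incident only to the new vertices.

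The message count and delivery bound now drop out. The single-source result of \cite{Hussak:2020,Turau:2020b} applied to $G'$ guarantees that each edge of $G'$ carries $m$ at most twice, which after restriction to $E$ gives the announced $2\abs{E}$ bound. The hypothesis also forces $r_i-1<d_G(v_1,v_i)\le Diam(G)$, so every $P_i$ has length at most $Diam(G)-1$; consequently $d_{G'}(v_1,w)\le Diam(G)$ for every $w\in V$, and $m$ reaches all nodes of $G$ within $Diam(G)$ rounds.

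The delicate part is the termination bound of $2Diam(G)+1$. Directly substituting into the single-source formula $\epsilon_{G'}(v_1)+Diam(G')+1$ is too weak, because $Diam(G')$ can be as large as $2Diam(G)-1$ when two auxiliary paths are attached at opposite ends of $G$. I would instead analyse the echo phase directly using the bipartite double cover $\G(v_1)$ of $G$: each $P_i$ corresponds to a tree-like pendant of $v_1$ whose echo runs in parallel with, and not in series after, the echo inside $G$. Combined with the invariant inherited from \cite{Turau:2020b} that all nodes transmitting in any fixed round lie in a single partition of $\G(v_1)$, this yields that the $G$-portion of the echo phase completes within $Diam(G)+1$ rounds after delivery, giving the claimed $2Diam(G)+1$ bound. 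This tightening of the naive virtual-source analysis is the single non-mechanical step; everything else follows from the reduction.
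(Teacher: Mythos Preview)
Your overall strategy—reduce staggered multi-source to a single-source instance on an auxiliary graph by attaching paths that delay the message appropriately—is the same as the paper's. The execution, however, differs at the crucial point, and your version has a real gap.

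\textbf{The construction.} You attach each auxiliary path $P_i$ \emph{from $v_1$ to $v_i$}, so both endpoints lie in $G$ and every $P_i$ creates a new cycle through $v_1$. The paper instead hangs a \emph{pendant} path of length $r_i$ off $v_i$ (so only one endpoint is in $G$) and puts the sources at the far ends; this reduces to a same-round multi-source instance on $G^\circ$, which in turn is handled by the virtual-source $v^\ast$ construction of \cite{Turau:2020b}.

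\textbf{Why this matters for the equivalence claim.} With pendant paths the projection onto $G$ is trivially identical to the original multi-source run: a pendant can only be entered at $v_i$, and any echo that $v_i$ pushes back into it dies at the leaf $u^i_1$. In your construction the echo that $v_i$ sends back along $P_i$ travels all the way to $v_1$. Your sentence ``the auxiliary path edges do not interfere because they are incident only to the new vertices'' is false for the first edge of every $P_i$: that edge is incident to $v_1$. Consequently you must rule out the scenario in which $v_1$ receives $m$ from a path-neighbour in a round in which it receives nothing from any $G$-neighbour; in that round $v_1$ would forward $m$ to all its $G$-neighbours in the $G'$-execution but stay silent in the true multi-source execution on $G$, breaking the projection equivalence. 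Your induction only goes ``up to round $r_i$'' and handles the $v_i$ endpoint; it never treats the return to $v_1$. (Empirically the echoes do seem to synchronise, but that requires an argument via the double cover of $G'$ which you have not supplied.)

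\textbf{Termination.} You correctly note that plugging $G'$ into $\epsilon_{G'}(v_1)+Diam(G')+1$ is too weak, but your proposed fix—``each $P_i$ corresponds to a tree-like pendant of $v_1$ whose echo runs in parallel''—contradicts your own construction: the $P_i$ are not pendants, they close cycles. The paper obtains the sharp $2\,Diam(G)+1$ by working in the bipartite cover $\G(v^\ast)$ of $G^\ast$ and bounding only the depths of the two copies of each node of $G$ there (using $r_i\le \epsilon_G(v_1)$), rather than bounding $Diam$ of the whole auxiliary graph.

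In short: switching from your $v_1$-rooted paths to the paper's pendant paths is not a cosmetic change—it is precisely what makes the projection equivalence immediate and removes the unanalysed feedback at $v_1$.
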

\begin{proof}
  WLOG we assume $r_1=0$. For each $i$ with $r_i>0$ we attach to node
  $v_i$ a path $P_i=u^i_1,\ldots, u^i_{r_i}$ with $r_i$ nodes, i.e.,
  $u^i_{r_1}$ is connected to $v_i$ by an edge. The extended graph is
  called $G^\circ$. Let
  $S=\{u^i_1\suchthat r_i>0\} \cup \{v_i\suchthat r_i=0\}$. If in
  $G^\circ$ all nodes in $S$ broadcast in round $0$ message $m$ then
  in round $r_i+1$ each node $v_i$ sends $m$ to all its neighbors in
  $G$. Thus, the forwarding of $m$ along the edges of $G$ is identical
  in $G$ and $G^\circ$. By Thm.~1 of \cite{Turau:2020b} algorithm
  \SynAF delivers $m$ after $d_{G^\circ}(S,V^\circ)$ rounds and
  terminates after at most $d_{G^\circ}(S,V^\circ)+1+Diam(G^\circ)$
  rounds, $V^\circ$ is the set of nodes of $G^\circ$. Also, in
  $G^\circ$ message $m$ is forwarded at most twice via each edge.
  Thus, in $G$ message $m$ is forwarded at most $2\abs{E}$ times.

  To prove the upper bounds for the delivery and termination time we
  reconsider the proof of Thm.~1 of \cite{Turau:2020b}. This proof
  constructs from $G^\circ$ a new graph $G^\ast$ by introducing a new
  node $v^\ast$ and connecting it to all nodes in $S$. It is then
  shown that the termination time of invoking the broadcast in
  $G^\circ$ by all nodes of $S$ in round $0$ is bounded by $d-1$,
  where $d$ is the depth of the bipartite graph $\G(v^\ast)$
  corresponding to $G^\ast$. Note that we are only interested in the
  termination time of the nodes of $G$ in $G^\circ$. Thus, we only
  have to bound the depth of the copies of the nodes of $G$ in
  $\G(v^\ast)$. Since broadcasts are invoked before $m$ is received
  for the first time we have $r_i\le \textit{ecc}_G(v_1)$. Thus, the
  depth of the first copy of each node has depth at most
  $\textit{ecc}_G(v_1)+1\le Diam(G)+1$ in $\G(v^\ast)$. Hence,
  delivery in $G$ takes place after $ Diam(G)$ rounds. The second copy
  of each node of $G$ is at most in distance $1+Diam(G)$ from one of
  the first copies of the nodes of $G$ in $\G(v^\ast)$. Thus,
  termination in $G$ is after at most $2Diam(G)+1$ rounds.
\end{proof}

The stated upper bounds are the worst case. Depending on the locations
of the nodes $v_i$ and the values of $r_i$ the actual times can be
much smaller. Next we extent Thm.~\ref{thm:multi} to tolerate
intermittent channel availabilities.

\begin{theorem}\label{thm:multi_gen}
  Let $A$ be an availability scheme for a graph $G$. Let
  $v_1,\ldots, v_{k}$ be nodes of $G$ that broadcast the same message
  $m$ in rounds $r_1,\ldots,r_{k}$. Each broadcast is invoked before
  $m$ reaches the invoking node. Algorithm \SynAFI delivers $m$
  (resp.\ terminates) in at most $Diam(G)+2f$ (resp.\ $2Diam(G)+2f+1$)
  rounds after the first broadcast with
  $f=\abs{\{(v,i)\suchthat A(v,i)=\false\}}$. Message $m$ is forwarded
  at most $2\abs{E}$ times.
 \end{theorem}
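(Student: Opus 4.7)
The plan is to combine the virtual-source reduction used in the proof of Theorem~\ref{thm:multi} with the bipartite auxiliary graph machinery developed for \SynAFI in Section~\ref{sec:proof-corr-synafi}. As before, for each $v_i$ with $r_i>0$ I would attach a path $P_i = u^i_1,\dots,u^i_{r_i}$ to $v_i$, producing the graph $G^\circ$, and let $S=\{u^i_1 : r_i>0\}\cup\{v_i : r_i=0\}$. Then I add a virtual source $v^\ast$ connected to every node of $S$, obtaining $G^\ast$. I extend $A$ to an availability scheme $A^\ast$ on $G^\ast$ by declaring $A^\ast(w,i)=\true$ for every path node, every node in $S$ at round $1$, and for $v^\ast$ itself. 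Under $A^\ast$ the number of blocked pairs remains $f$, and by the hypothesis $r_i$ is no larger than the distance from $v_i$ to any already-notified node, so no broadcast at $v_i$ collides with an earlier arrival of $m$.

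The key equivalence to establish is that executing \SynAFI on $G$ with the multi-source schedule under $A$ produces the same sequence of sends across the edges of $G$ as executing \SynAFI on $G^\ast$ with a single broadcast at $v^\ast$ at round~$0$ under $A^\ast$. This works because the added paths act as synchronous delays: the copy of $m$ travelling down $P_i$ arrives at $v_i$ exactly in round $r_i$ with the same parity as a fresh local invocation. Crucially, the parity-preserving postponement rule of \SynAFI guarantees that the set $v_i.M(\pr)$ is identical in both executions at the moment of first forwarding, so the two runs are indistinguishable on the edges of $G$.

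Given the equivalence, Theorem~\ref{theo:main1} applied to $G^\ast$ with $A^\ast$ tells me that the execution is captured by the bipartite graph $\B_{A^\ast}(v^\ast)$, and the number of rounds until every copy of a node of $G$ has appeared (respectively, until the graph is exhausted) is the relevant depth in $\B_{A^\ast}(v^\ast)$. For the undisturbed case, the proof of Theorem~\ref{thm:multi} bounds the depth of the first copy of any node of $G$ by $Diam(G)+1$ and of the second copy by $2Diam(G)+2$ in $\G(v^\ast)$. By Lemma~\ref{lem:one-to-one}, each of the $f$ pairs $(v,i)$ with $A(v,i)=\false$ adds at most two layers to $\B_{A^\ast}$, giving the delivery bound $Diam(G)+2f$ and termination bound $2Diam(G)+2f+1$ on the nodes of $G$ (counted from the first broadcast round).

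For the message complexity, Lemma~\ref{lem:one-to-one} yields a bijection between edges of $\G(v^\ast)$ and non-dummy edges of $\B_{A^\ast}(v^\ast)$, and Lemma~\ref{lem:depth} says each such edge corresponds to at most one forward of $m$. Restricting to edges of $G^\circ$ that lie in $G$ (discarding path edges and virtual edges incident to $v^\ast$), each edge of $G$ carries $m$ at most twice, for a total of at most $2|E|$ forwards. The main obstacle I anticipate is the careful verification of the parity alignment in the equivalence step: one must check that the blocked rounds on the nodes of $G$ have the same parity-index in the $G$-execution and in the $G^\ast$-execution, which follows from the fact that the attached paths have integer length $r_i$ and hence preserve round parities between the two scenarios.
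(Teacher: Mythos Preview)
Your proposal is correct and follows essentially the same approach as the paper: reduce the multi-source broadcast to a single broadcast from a virtual source $v^\ast$ in the extended graph $G^\ast$ (exactly as in Theorem~\ref{thm:multi}), then invoke Theorem~\ref{theo:main1} on $G^\ast$ with the extended availability scheme. The paper's own proof is considerably terser---it simply points to the reduction in Theorem~\ref{thm:multi} and applies Theorem~\ref{theo:main1}---whereas you spell out the extension $A^\ast$, the parity-alignment issue, and the layer-count argument via Lemma~\ref{lem:one-to-one}; these are details the paper leaves implicit but your treatment of them is sound.
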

\begin{proof}
  In the proof of Thm.~\ref{thm:multi} it is shown that broadcasting
  the same message $m$ in different rounds by different nodes is
  equivalent to the single broadcast of $m$ by a single node $v^\ast$
  in the graph $G^\ast$. Applying Thm.~\ref{theo:main1} to $G^\ast$
  and $v^\ast$ shows that \SynAF delivers $m$ to all nodes of $G^\ast$
  for any availability scheme. Hence, Thm.~\ref{thm:multi} also holds
  for any availability scheme. 
\end{proof}

\section{Multi-Message Broadcasts}
\label{sec:dynamic}

While algorithm \SynAFI is of interest on its own, it can be used as a
building block for more general broadcasting tasks. In this section we
consider {\em multi-message broadcasts}, i.e., multiple nodes initiate
broadcasts, each with its own message, even when broadcasts from
previous initiations have not completed. We consider this task under
the restriction that in each round each node can forward at most $b$
messages to each of its neighbors. Without this restriction we can
execute one instance of \SynAF for each broadcasted message. Then each
messages is delivered (resp.\ the broadcast terminates) in
$\textit{ecc}(v_0)$ (resp.\ $\textit{ecc}(v_0) + 1 + Diam(G)$) rounds
\cite{Turau:2020b}. The restriction enforces that only $b$ instances
of \SynAF can be active in each round, additional instances have to be
suspended. First consider the case $b=1$.

Multi-message broadcast can be solved with an extension of
algorithm \SynAFI. We use an associative array \mess to store the
senders of suspended messages according to their parity. Message
identifiers are the keys, the values correspond to variable $M$ of
Alg.~\ref{alg:intermitted}. Any time a node $v$ receives a message $m$
with identifier $id$ from a neighbor $w$ it is checked whether
$v.\mess$ already contains an entry with key $id$ for the current
$\pr$. If not, a new entry is created. Then $w$ is inserted according
to the actual value of $\pr$ into $v.\mess[id]$. When all messages of
a round are received all values in $v.\mess$ with the current $\pr$
are checked, if a value equals $N(v)$ then it is set to $\bot$. In
this case $v$ received message $id$ from all neighbors and no action
is required. After this cleaning step, an entry of $\mess$ is selected
for which the value with the current $\pr$ is not $\bot$. Selection is
performed according to a given criterion. The message belonging to
this entry is sent to all neighbors but those listed in the entry.
Finally the entry is set to $\bot$. The details of this algorithm can
be found in App.~\ref{app:dynamic_app}. The delivery order of messages
depends on the selection criterion. The variant of this algorithm
which always selects the method with the smallest id is called
\SynAFIS.

\begin{theorem}\label{theo:dyn_mess}
  Algorithm \SynAFIS eventually delivers each message of any sequence
  of broadcasts of messages with different identifiers. If $G$ is
  bipartite, each message is forwarded $\abs{E}$ times, otherwise $2\abs{E}$ times.
\end{theorem}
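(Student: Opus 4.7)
The plan is to decouple the behavior of \SynAFIS one message at a time and reduce each message to Theorem~\ref{theo:main1}. Enumerate the messages of $\cal S$ in strictly increasing order of their identifiers as $m_1, m_2, \ldots$. Since \SynAFIS always forwards the pending message with the smallest id, the smallest message $m_1$ is never preempted: from its local point of view \SynAFIS behaves exactly like a standalone run of \SynAFI on $G$ for $m_1$ under the trivial availability scheme $A_1 \equiv \true$. Theorem~\ref{theo:main1} applied with $f=0$ therefore yields delivery and termination in finitely many rounds and the claimed count of $2\abs{E}$ forwards, or $\abs{E}$ in the bipartite case, for $m_1$.

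For the inductive step I would assume that $m_1, \ldots, m_{k-1}$ each terminate in finitely many rounds using the forwarding count asserted in the theorem. Define an availability scheme $A_k$ on $G$ by $A_k(v,i) = \false$ iff, in round $i$ of the actual \SynAFIS execution, node $v$ selects some $m_j$ with $j<k$ instead of $m_k$. By the inductive hypothesis each earlier message contributes only finitely many such pairs $(v,i)$, so the set $\{(v,i) : A_k(v,i)=\false\}$ is finite. The crux of the induction is a coupling claim: the evolution of $v.\mess[id(m_k)]$ under \SynAFIS matches, round for round, the evolution of the variable $M(\pr)$ in an independent run of \SynAFI on $G$ for $m_k$ under the scheme $A_k$. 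Granting this, a second application of Theorem~\ref{theo:main1} shows that $m_k$ is delivered and terminates in finite time with the asserted forwarding count, closing the induction.

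The main obstacle is verifying the coupling claim. I would check it round by round, on three points: (i) the parity variable $\pr$ is global to the node and updated at the end of every round, so it agrees with the parity used by the standalone \SynAFI run for $m_k$; (ii) the cleaning step of \SynAFIS that resets $v.\mess[id]$ to $\bot$ as soon as its recorded senders equal $N(v)$ emulates exactly the ``no forwarding required'' condition of \SynAFI; and (iii) in a round in which $v$ chooses to forward a smaller-id message, the entry $v.\mess[id(m_k)]$ is left untouched, which is precisely the behavior of \SynAFI in a round with $A(v,i)=\false$. Once this bisimulation between the $m_k$-component of \SynAFIS and a standalone \SynAFI run under $A_k$ is in place, finite termination, delivery and the exact forwarding counts of each $m_k$ follow from Theorem~\ref{theo:main1}, and the countable induction over the messages of $\cal S$ yields the theorem.
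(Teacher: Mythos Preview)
Your proposal is correct and follows essentially the same route as the paper: induct over messages in increasing id order, observe that the smallest-id message runs as plain amnesiac flooding, and for each subsequent $m_k$ define an availability scheme that is $\false$ exactly in the rounds where a smaller-id message is selected, then invoke Theorem~\ref{theo:main1}. Your explicit bisimulation check (points (i)--(iii)) spells out what the paper leaves implicit in the phrase ``is forwarded as with algorithm \SynAFI for availability scheme $A_1$'', but the underlying argument is the same.
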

\begin{proof}
  The message with the smallest identifier $id_1$ is always forwarded
  first by \SynAFIS. Thus, this message is forwarded as in amnesiac
  flooding. Hence, it is delivered after at most $2Diam(G)+1$ rounds
  after it is broadcasted \cite{Turau:2020b}. Next we define an
  availability scheme $A_1$: $A_1(v,i)=\textit{false}$ if during round
  $i$ of algorithm \SynAFIS node $v$ forwards message $id_1$,
  otherwise let $A_1(v,i)=\textit{true}$. Then the message with the
  second smallest identifier $id_2$ is forwarded as with algorithm
  \SynAFI for availability scheme $A_1$. Thus, by
  Thm.~\ref{theo:main1} this message is eventually delivered. Next
  define availability scheme $A_2$ similarly to $A_1$ with respect to
  the messages with ids $id_1$ and $id_2$ and apply again
  Thm.~\ref{theo:main1}, etc.
\end{proof}

Forwarding the message with the smallest id is only one option. Other
selection criteria are also possible, but without care starvation can
occur. The variant, where the selection of the forwarded message is
fair, is called \SynAFIF. Fairness in this context means, that each
message is selected after at most a fixed number of selections. This
fairness criteria limits the number of concurrent broadcasts. If
message selection is unfair for one of the nodes, then continuously
inserting new messages results in starvation of a message. We have the
following result.

  
\begin{theorem}\label{thm:diffId}
  If in each round each node can forward only one message to each of
  its neighbors algorithm Algorithm \SynAFIF eventually terminates and
  delivers each message of any sequence of broadcasts of messages with
  different identifiers. If $G$ is bipartite, each message is
  forwarded $\abs{E}$ times, otherwise $2\abs{E}$ times.
\end{theorem}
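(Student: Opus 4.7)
The plan is to reduce the execution of \SynAFIF on the full broadcast sequence to a collection of essentially independent \SynAFI executions, one per message, and then invoke Theorem~\ref{theo:main1} on each of them separately. Fix an arbitrary message $m$ with identifier $id_m$ from the sequence, broadcast by some node $v_m$. The key structural observation is that \mess is keyed by message id, so the entries $v.\mess[id_m]$ at any node $v$ are read and written only by receptions and forwardings of $m$ itself; the other broadcasts interact with $m$'s evolution solely by occupying the single send slot at $v$ in rounds where $m$ would otherwise be forwarded. I would formalize this by defining an availability scheme $A_m$ for $G$ rooted at $v_m$: set $A_m(v,i)=\false$ in exactly those rounds $i$ in which $v$ has $m$ ready to be forwarded (i.e.\ $v.\mess[id_m][\pr]\notin\{\bot,N(v)\}$) but the fair selection rule picks a different pending message, and $A_m(v,i)=\true$ otherwise. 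Under this scheme the projection of \SynAFIF onto $m$ coincides round-for-round with an execution of \SynAFI on $G$ for message $m$ with availability scheme $A_m$.

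The crux, and the step I expect to be the main obstacle, is showing that $f_m := \abs{\{(v,i):A_m(v,i)=\false\}}$ is finite. At a single node $v$, each pair $(v,i)$ counted by $f_m$ is a round in which some message other than $m$ is forwarded while $m$ is pending at $v$. Fairness bounds the number of consecutive selections of other messages between two selections of $m$ by a quantity that depends only on the current buffer size at $v$. Since the global broadcast sequence is finite (as required by termination), only finitely many distinct ids ever populate $v.\mess$, so the buffer size at $v$ is uniformly bounded and thus $m$ is postponed at $v$ in only finitely many rounds. Summing over the finite node set yields a finite $f_m$.

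Once $f_m$ is established to be finite, Theorem~\ref{theo:main1} applied to $m$ with availability scheme $A_m$ immediately gives that \SynAFIF delivers $m$ within $Diam(G)+2f_m$ rounds after its broadcast, terminates all activity associated with $m$ within $2Diam(G)+2f_m+1$ rounds, and forwards $m$ exactly $\abs{E}$ times if $G$ is bipartite and $2\abs{E}$ times otherwise. Iterating this argument over every message of the sequence shows that each message is delivered and its forwarding ceases; since the sequence is finite and each per-message computation eventually halts, the global algorithm \SynAFIF terminates, with the claimed message counts.
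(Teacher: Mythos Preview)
Your high-level strategy—projecting \SynAFIF onto a single message $m$, viewing the other messages as channel unavailabilities, and then invoking Theorem~\ref{theo:main1}—is the same as the paper's. The gap is in your argument that $f_m$ is finite. From fairness and the bound $K$ on the number of distinct ids you correctly deduce that each \emph{pending episode} of $m$ at a node $v$ (the stretch from the moment $v.\mess[id_m][\pr]$ becomes non-$\bot$ until $m$ is selected) has bounded length. But you never bound the \emph{number} of such episodes at $v$: every time a neighbor forwards $m$ to $v$ after the entry has been reset to $\bot$, a new episode begins. Bounding the number of episodes amounts to bounding how often $m$ is forwarded to $v$, which is exactly the $2\abs{E}$ bound you want from Theorem~\ref{theo:main1}—and that theorem requires $f_m$ finite to begin with. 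So the argument is circular as written.

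The paper avoids this circularity by a truncation device. Rather than a single scheme $A_m$, it defines for every round $j$ a scheme $A_j$ that is $\true$ for all rounds beyond $j$, and for $i\le j$ sets $A_j(v,i)=\true$ precisely when $v$ actually forwards $m$ in round $i$ of the \SynAFIF execution. Each $A_j$ has at most $n\cdot j$ false pairs, so it is trivially an availability scheme and Theorem~\ref{theo:main1} applies, bounding the number of forwards of $m$ in the first $j$ rounds by $2\abs{E}$. Since this bound is uniform in $j$, the total number of forwards of $m$ over the entire execution is at most $2\abs{E}$; combined with fairness (every pending message is eventually selected), this forces the forwarding of $m$ to cease with every node having received it. Your per-message reduction is sound; only the finiteness step needs to be replaced by this truncation argument.
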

\begin{proof}
  Whenever the associative array \mess of a node is non-empty, the
  node will forward a message in the next round with the adequate
  parity. The fairness assumption implies that whenever $m$ is
  inserted into $w.\mess$ for a node $w$ then after a bounded number
  of rounds it will be forwarded and removed from $w.\mess$. Thus, the
  forwarding of $m$ makes progress.

  Let $m$ be a fixed message that is broadcasted in some round $i_m$.
  Denote by $f_j$ the number of forwards of message $m$ up to round
  $j$. For each $j$ we define an availability scheme $A_j$ as follows:
  $A_j(v,i)=\true$ for all $i>j$ and all $v\in V$. Furthermore,
  $A_j(v,i)=\true$ for $i\le j$ and $v\in V$ if during round $i$ node
  $v$ forwards message $m$. For all other pairs let $A_j(v,i)=\false$.
  Hence, there are only finitely many pairs $(v,i)$ such that
  $A_j(v,i) = \false$. Clearly for all $j$, message $m$ is forwarded
  during the first $j$ rounds as with algorithm \SynAFI with respect
  to $A_j$. Thus, by Thm.~\ref{theo:main1} $f_j\le 2m$. Hence, there
  exist $j_m\ge i_m$ such that in round $j_m$ each node has received
  the message and after this round the message is no longer in the
  system. Hence, the result follows from Thm.~\ref{theo:main1}.
  \end{proof}

  The case $b>1$ is proved similarly. We only have to make a single
  change to \SynAFIF. After the cleaning step we select up to $b$
  entries of \mess and send the corresponding messages. The proof of
  Thm.~\ref{thm:b_mess} is similar to that of Thm.~\ref{thm:diffId}.

  \begin{theorem}\label{thm:b_mess}
    If in each round each node can forward at most $b\ge 1$ messages
    to each of its neighbors algorithm \SynAFIF eventually terminates
    and delivers each message of any sequence of broadcasts of
    messages with different identifiers. If $G$ is bipartite, each
    message is forwarded $\abs{E}$ times, otherwise $2\abs{E}$ times.
  \end{theorem}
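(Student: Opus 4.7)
The plan is to reduce to Thm.~\ref{thm:diffId} by adapting its argument to the setting where up to $b$ messages can be forwarded per round per edge. The key observation is that for a single fixed message $m$, the associated availability scheme depends only on whether $v$ selected $m$ for forwarding in round $i$, and not on how many other messages were forwarded concurrently. Increasing the per-round capacity from $1$ to $b$ therefore only changes the scheduling discipline, not the structural coupling between a single message's execution and an instance of \SynAFI.

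Concretely, I would fix an arbitrary message $m$ broadcast in some round $i_m$ and define, for every $j\ge i_m$, an availability scheme $A_j$ on $G$ by setting $A_j(v,i)=\true$ whenever either $i>j$, or $i\le j$ and $v$ forwarded $m$ in round $i$ of the \SynAFIF execution, and $A_j(v,i)=\false$ otherwise. The entry of $v.\mess$ holding the senders of $m$ evolves exactly as the set $M(\pr)$ would in an execution of \SynAFI on $G$ under $A_j$: reception of $m$ appends the sender, and whenever $v$ is permitted to forward $m$ the entry is cleared in the identical way. Hence, up to round $j$, the trace of $m$ under \SynAFIF coincides with an execution of \SynAFI on $G$ under $A_j$.

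Applying Thm.~\ref{theo:main1} then yields that the number of forwards of $m$ during the first $j$ rounds is at most $2\abs{E}$ (or $\abs{E}$ if $G$ is bipartite), uniformly in $j$. Combined with the fairness property of \SynAFIF, which ensures that as long as the entry of $v.\mess$ for $m$ is not $\bot$ the message $m$ will be among the at most $b$ messages selected by $v$ within a bounded number of rounds, this forces that entry to become and stay $\bot$ after finitely many rounds at every node. One final application of Thm.~\ref{theo:main1} then gives the claimed delivery and the edge bound. The main obstacle I foresee is arguing cleanly that per-message independence survives the joint selection of up to $b$ ids per round: one must check that the cleaning step and the $b$-ary selection do not couple distinct messages beyond what the individual schemes $A_j$ already encode, but this will follow because entries of \mess for distinct ids are maintained independently of one another.
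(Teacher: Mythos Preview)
Your proposal is correct and follows essentially the same approach as the paper: the paper simply states that the proof of Thm.~\ref{thm:b_mess} is ``similar to that of Thm.~\ref{thm:diffId}'', and your argument reproduces that proof (per-message availability scheme $A_j$, reduction to \SynAFI via Thm.~\ref{theo:main1}, fairness for progress) with the observation that the $b$-ary selection does not couple the per-id entries of \mess. If anything, you supply more detail than the paper does on why the move from $b=1$ to general $b$ is harmless.
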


  Finally, Thm.~\ref{thm:main} follows directly from
  Thm.~\ref{thm:multi_gen} and Thm.~\ref{thm:b_mess}.

\section{Discussion and Conclusion}
\label{sec:concl}
In this paper we proposed extensions to the synchronous broadcast
algorithm amnesiac flooding. The main extension allows to execute the
algorithm for systems with intermittent channels. While this is of
interest on its own, it is the basis to solve the general task of
multi-message broadcast in systems with bounded channel capacities.
The extended algorithm delivers messages broadcasted by multiple nodes
in different rounds, even when broadcasts from previous invocations
have not completed, while each of the messages is forwarded at most
$2\abs{E}$ times. The main advantage of amnesiac flooding remains,
nodes don’t need to memorize the reception of a message to guarantee
termination.

We conclude by discussing two shortcomings of amnesiac flooding.
\SynAF delivers a broadcasted message twice to each node. To avoid
duplicate delivery, nodes have to use a buffer. Upon receiving a
message $m$ a node checks whether the id of $m$ is contained in its
buffer. If not then $m$ is delivered to the application and $m$'s id
is inserted into the buffer. Otherwise, $m$'s id is removed from the
buffer and not delivered. This also holds for algorithm \SynAFIF.

Amnesiac flooding satisfies the FIFO order, i.e., if a node $v_0$
broadcasts a message $m$ before it broadcasts a message $m'$ then no
node delivers $m'$ unless it has previously delivered $m$. This
property is no longer satisfied for \SynAFI as the following example
shows. Suppose that $v_0$ broadcasts $m$ resp.\ $m'$ in rounds $i$
resp.\ $i+1$. Let $w$ be a neighbor of $v_0$ with $A(w,i+2)=\false$
and $A(v,j)=\true$ for all other pairs. Then node $w$ forwards $m'$ in
round $i+3$ while it forwards $m$ in round $i+4$. Thus, a neighbor $u$
of $w$ receives $m'$ before $m$.

\bibliographystyle{unsrt}
\bibliography{document}

\newpage
\appendix

\section{Execution of Algorithm \SynAF}
\label{app:exec-algor-synaf}

\begin{figure}[h]%
  \hfill
    \includegraphics[scale=0.85]{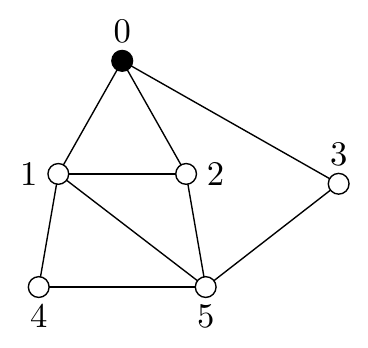}
  \hfill
    \includegraphics[scale=0.85]{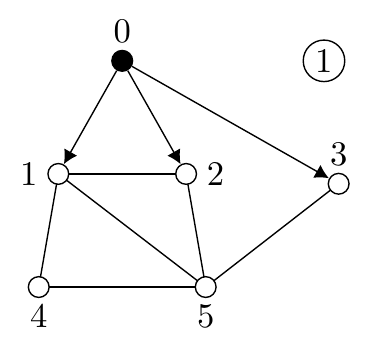}
  \hfill
    \includegraphics[scale=0.85]{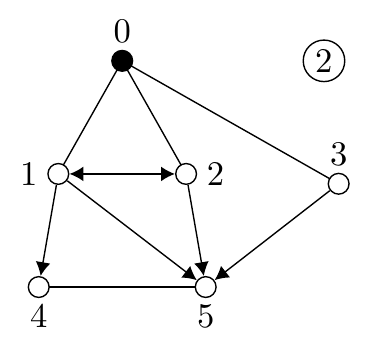}
  \hfill
    \includegraphics[scale=0.85]{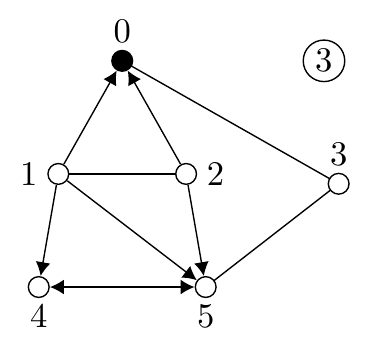}
  \hfill
    \includegraphics[scale=0.85]{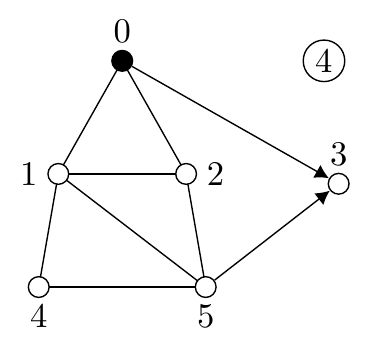}
  \hfill\null
  \caption{The original amnesiac flooding algorithm \SynAF terminates
    after 4 rounds for the graph of
    Fig.~\ref{fig:unlimited}.}\label{fig:XX}
\end{figure}

\section{Algorithm for Multi-Message Broadcast}
\label{app:dynamic_app} 
In this section we describe the extension of algorithm \SynAFI to
realize multi-message broadcasts. As with \SynAFI each node $v$ has
two variables. First, a Boolean flag \pr that is toggled at the end of
every round. The values of \pr must not be synchronized among nodes.
The second variable corresponds to variable $M$ of \SynAF, it is used
to store the senders of the messages according to the parity of the
round in which they were received. In multi-message broadcasts a node
can receive different messages in a round and therefore must be
prepared to separately store the senders of these messages. An
associative array \mess is used for this purpose. Message identifiers
are the keys, the values correspond to variable $M$ of \SynAFI. Values
consist of two parts $list[\true]$ and $list[\false]$, corresponding
to the round's parity. The symbol $\bot$ indicates that no message has
arrived in rounds with the specified parity. This is needed to
distinguish this from the case when a node invokes a broadcast, in
this case the value is the empty set $\emptyset$. If we insert a node
$w$ when the value is $\bot$ then it is $\{w\}$ afterwards.
Tab.~\ref{tab:messTab} shows an example of \mess.

\renewcommand{\arraystretch}{1.25}
\begin{table}[htbp]
  \centering
  \begin{tabular}{c|c|c|c}
Message Id & Message & $list[\true]$ & $list[\false]$ \\ \hline\hline
    $17$ & $\ldots$ &$v_1,v_3$ & $v_1, v_4$ \\ \hline
    $123$ & $\ldots$ &$v_{15}$ & $\bot$ \\ \hline
    $3$ & $\ldots$ &$\emptyset$ & $\emptyset$\\
  \end{tabular}
  \caption[Tabelle]{Example of a node's associative array \mess.}
  \label{tab:messTab}
\end{table}

Fig.~\ref{fig:algo} shows the pseudo code of the proposed extension of
\SynAFI. In every round the following three steps are executed: First,
received messages are used to update the message table. In the second
step a message is selected from the message table and sent to those
neighbors not listed in the appropriate column of the corresponding
row. As a last step the flag \pr is toggled. 

\begin{figure}
  \par\noindent\rule{\textwidth}{0.4pt}
  \input{algo}
  \par\noindent\rule{\textwidth}{0.4pt}
  \caption{Algorithm\label{fig:algo}}
\end{figure}

Next we describe the first two steps at full length. The details of
the first step are as follows. Any time a node $v$ receives a message
$m$ with identifier $id$ from a neighbor $w$ it is checked whether
$v$'s message table already contains a row for $id$. If not, a new row
is created and the first two columns are filled with $id$ and $m$. The
last two columns contain the symbol $\bot$. In any case the node $w$
is appended to the list in the third or forth column according to the
current \pr into $v.\mess[id].list$. In case the
corresponding entry is $\bot$ a new list with the single element $w$
is created.

When all messages of a round are received then the following cleaning
action is performed as the closing-off of the first step. All values
in $v.\mess$ with the current $\pr$ are checked. If a value equals
$N(v)$ then it is set to $\bot$. In this case $v$ received message
$id$ from all neighbors and no action is required. After this cleaning
step, an entry of $\mess$ is selected for which the value with the
current $\pr$ is not $\bot$. Selection is performed according to a
given criterion. The message belonging to this entry is sent to all
neighbors but those listed in the entry. Finally the entry is set to
$\bot$.

Initially for each node $v$ the associative array \mess is empty and
flag \pr has an arbitrary value. A node $v$ that wants to disseminate
a new message $m$ with the identifier $id$ creates a new row in the
message table and inserts the value $id$ and $m$ into the first two
columns. The last two columns contain the empty list $\emptyset$. The
third row of Tab.~\ref{tab:messTab} is an example for this situation.

If the node $v$ with the message table shown in Tab.~\ref{tab:messTab}
receives in a round with $\pr=\false$ a message with $id=17$ from
neighbors $v_1,v_5$, and $v_8$ the last column of the corresponding
row would be updated to $v_1,v_4,v_5,v_8$. If the id of the received
message is $123$ then the last column would be updated to
$v_1,v_5,v_8$. Next we give an example for the execution of the second
part of the algorithm for Tab.~\ref{tab:messTab}. If $\pr =\false$ and
the first row is selected, the message with $id=17$ is sent to all
neighbors of $v$ except $v_1$ and $v_4$. If the last row is selected,
the message with $id=3$ is sent to all neighbors. In the first case
the last column is set to $\bot$ and the row remains in the table. In
the second case the row is deleted. If on the other hand $\pr =\true$
and the second row is selected the message with $id=123$ is sent to
all neighbors of the node except node $v_{15}$ and the row is deleted
from the table.

\end{document}